\documentclass[10pt]{article}
\pdfoutput=1

\usepackage{algorithm,algpseudocode}
\usepackage{amsmath,amsthm}
\usepackage{enumitem}
\usepackage{geometry}
\usepackage{mathtools}
\usepackage{subcaption}

\geometry{margin=1in}

\mathtoolsset{showonlyrefs=true}

\DeclareMathOperator{\polylog}{polylog}

\title{Deterministic and Work-Efficient\\Parallel Batch-Dynamic Trees in Low Span}

\author{ \normalsize Daniel Anderson \\ \normalsize Carnegie Mellon University \\ \normalsize dlanders@cs.cmu.edu \and \normalsize Guy E. Blelloch \\ \normalsize Carnegie Mellon University \\ \normalsize guyb@cs.cmu.edu}

\date{}

\newcommand{\myparagraph}[1]{\medskip{\noindent\bfseries\itshape{#1}.~}}

\newtheorem{theorem}{Theorem}
\newtheorem{definition}{Definition}

\newtheorem{corollary}{Corollary}
\newtheorem{lemma}{Lemma}

\begin{document}

\maketitle

\begin{abstract}
    Dynamic trees are a well-studied and fundamental building block of dynamic graph algorithms dating back to the seminal work of
    Sleator and Tarjan [\textit{STOC'81}, (1981), pp.~114-122]. The problem is to maintain a tree subject to online edge insertions and
    deletions while answering queries about the tree, such as the heaviest weight on a path, etc.
    In the parallel batch-dynamic setting, the goal is to process batches of edge updates work efficiently in low ($\polylog n$) span.
    Two work-efficient algorithms are known, batch-parallel Euler Tour Trees by Tseng et al.\ [\textit{ALENEX'19}, (2019), pp.~92--106]
    and parallel Rake-Compress (RC) Trees by Acar et al.\ [\textit{ESA'20}, (2020), pp.~2:1--2:23]. Both however are randomized and
    work efficient in expectation. Several downstream results that use these data structures (and indeed to the best of our knowledge, all known
    work-efficient parallel batch-dynamic graph algorithms) are therefore also randomized.
    
    In this work, we give the first deterministic
    work-efficient solution to the problem. Our algorithm maintains a dynamic parallel tree contraction subject to batches of $k$ edge updates deterministically in
    worst-case $O(k \log(1 + n/k))$ work and $O(\log n \log^{(c)} k)$ span for any constant $c$. This allows us to implement parallel batch-dynamic
    RC-Trees with worst-case $O(k \log(1 + n/k))$ work updates and queries deterministically. Our techniques that we use to obtain the given span bound
    can also be applied to the state-of-the-art randomized variant of the algorithm to improve its span from $O(\log n \log^* n)$ to $O(\log n)$.
\end{abstract}

\clearpage

\section{Introduction}

The dynamic trees problem dates back to the seminal work of Sleator and Tarjan~\cite{sleator1983data} on \emph{Link-Cut trees}. The problem is to maintain a forest of trees subject to the insertion and deletion of edges, while answering queries about the forest. Examples of supported queries include connectivity (is there a path from $u$ to $v$, i.e., are $u$ and $v$ in the same tree), the weight of all vertices in a specific subtree, and the weight of the heaviest (or lightest) edge on a path. The latter is a key ingredient in the design of efficient algorithms for the maximum flow problem~\cite{sleator1983data,goldberg1988new,ahuja1989improved,goldberg1991use,tarjan1997dynamic}, which was the motivation for their invention. Dynamic trees are also ingredients in algorithms for dynamic graph connectivity~\cite{frederickson1985data,henzinger1995randomized,holm2001poly,kapron2013dynamic,acar2019parallel} dynamic minimum spanning trees~\cite{frederickson1985data,henzinger1995randomized,henzinger2001maintaining,holm2001poly,anderson2020work}, and minimum cuts~\cite{karger2000minimum,geissmann2018parallel,gawrychowski2019minimum,anderson2021cuts}.

There are a number of efficient ($O(\log n)$ time per operation) dynamic tree algorithms, including Sleator and Tarjan's \emph{Link-Cut Tree}~\cite{sleator1983data,sleator1985self}, Henzinger and King's \emph{Euler-Tour Trees}~\cite{henzinger1995randomized}, Frederickson's \emph{Topology Trees}~\cite{frederickson1985data,frederickson1997ambivalent,frederickson1997data}, Holm and de Lichtenberg's \emph{Top Trees}~\cite{holm1998top,tarjan2005self,alstrup2005maintaining}, and Acar et al.'s Rake-Compress Trees~\cite{acar2004dynamizing,acar2005experimental,acar2020batch}. Most of these algorithms are sequential and handle single edge updates at a time.

Two exceptions are Tseng et al.'s \emph{Batch-Parallel Euler-Tour Trees}~\cite{tseng2019batch} and Acar et al.'s \emph{Parallel Batch-Dynamic RC-Trees}~\cite{acar2020batch}. These algorithms implement \emph{batch-dynamic} updates, which take a set of $k$ edges to insert or delete with the goal of doing so in parallel. Both of these algorithms achieve work-efficient $O(k \log(1+n/k))$ work, which matches the sequential algorithms ($O(\log n)$ work) for low values $k$, and is optimal ($O(n)$ work) for large values of $k$. Both of these algorithms, however, are randomized.

Other graph problems have also been studied in the parallel batch-dynamic model, such as connectivity~\cite{acar2019parallel}, minimum spanning trees~\cite{ferragina1994batch,ferragina1996three,pawagi1993optimal,shen1993parallel,anderson2020work,tseng2022parallel}, and approximate $k$-core decomposition~\cite{liu2022parallel}. However, to the best of our knowledge, all efficient parallel batch-dynamic graph algorithms are randomized. Indeed, avoiding randomization seems difficult even for some classic static problems. Finding a spanning forest, for instance, has a simple $O(m)$-time sequential algorithm, and an $O(m)$ work, $O(\log n)$ span randomized parallel algorithm has been known for twenty years~\cite{pettie2002randomized}, but no deterministic equivalent has been discovered. The best deterministic algorithm requires an additional $\alpha(n,m)$ factor of work~\cite{cole1991approximate}.

\myparagraph{Our results}
We design a work-efficient algorithm for the batch-dynamic trees problem that is deterministic, $\polylog n$ span, and doesn't use concurrent writes.

\begin{theorem}
There is a deterministic parallel batch-dynamic algorithm that maintains a balanced Rake-Compress Tree (RC-Tree) of a bounded-degree forest subject to batches of $k$ edge updates (insertions, deletions, or both) in $O(k \log(1+n/k))$ work and $O(\log(n) \log^{(c)} k)$ span\footnote{The notation $\log^{(c)} n$ refers to the \emph{repeated/nested logarithm} function, i.e., $\log^{(1)} n = \log n$ and $\log^{(c)} n = \log( \log^{(c-1)} n )$. E.g., $\log^{(2)} n = \log \log n$.} for any constant $c$.
\end{theorem}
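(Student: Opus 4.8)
The plan is to reduce the whole problem to maintaining a \emph{deterministic parallel tree contraction}, since an RC-Tree is exactly the recursive certificate produced by contracting a bounded-degree forest through alternating \emph{rake} rounds (deleting degree-one vertices into their neighbor) and \emph{compress} rounds (splicing out an independent set of degree-two vertices and bridging their two neighbors by a new edge), until only isolated vertices remain. For bounded degree the contraction finishes in $O(\log n)$ rounds provided every round eliminates a constant fraction of the surviving vertices, and the RC-Tree records, for each vertex, the round and operation by which it was eliminated. Following the randomized scheme of Acar et al., I would express each vertex's fate at each level as a function of purely local level data, so that a batch of $k$ edge updates can be serviced by recomputing only those vertices whose local data changed and propagating the changes upward through the levels.

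The heart of the argument, and where determinism is won, is the compress step. The randomized algorithm chooses which degree-two vertices to splice using independent coin flips; I would instead fix a deterministic rule, based on the unique vertex identifiers, that selects the set to compress. This rule must satisfy three competing requirements simultaneously: (i) the selected set must be \emph{independent} so that simultaneous compressions do not interfere; (ii) it must compress a constant fraction of the degree-two vertices on every maximal chain, so the contraction still terminates in $O(\log n)$ rounds; and (iii) it must be \emph{local} and \emph{stable}, meaning a vertex's membership depends only on a short window of the chain around it, so that an edge update perturbs the decisions of only $O(1)$ vertices per affected chain. The natural candidate is deterministic symmetry breaking in the style of Cole--Vishkin deterministic coin tossing, which reduces the coloring of a path toward $O(1)$ colors, from which a maximal independent set yields the compress set. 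Rake requires only a trivial identifier-based tie-break for adjacent leaves, so compress is the sole source of difficulty.

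The second step is the work bound. Locality of the rule confines the set of vertices whose decisions must be recomputed at level $i$ to the neighborhoods of the $k$ updated edges, so the number of \emph{affected} vertices at level $i$ is $O(\min(k, n/2^i))$: it is at most $O(k)$ by locality, and at most the $O(n/2^i)$ vertices surviving to that level. Summing $\sum_i \min(k, n/2^i)$ over the $O(\log n)$ levels telescopes to $O(k \log(1 + n/k))$, matching the optimal sequential bound. The crucial point is that the determinism of the rule makes this charging hold in the \emph{worst case} rather than merely in expectation, which is precisely what replacing the coin flips buys.

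The main obstacle, and the origin of the $\log^{(c)} k$ factor, is reconciling requirements (ii) and (iii) with low span. A full $O(\log^* k)$-round deterministic coin-tossing reduction would both widen the locality window (hurting the work bound) and, rerun at each of the $O(\log n)$ levels, contribute an $O(\log^* k)$ span factor. I would instead relabel each affected chain with contiguous local identifiers, so that its number of colors is proportional to its length rather than to $n$, and then run only a constant number $c$ of color-reduction rounds, stopping once each chain carries $O(\log^{(c)} k)$ colors, which are then swept directly; this caps the per-level span at $O(\log^{(c)} k)$ while keeping the window constant, at the cost of leaving the nested logarithm in the bound. Composing the $O(\log n)$ levels gives span $O(\log(n)\,\log^{(c)} k)$. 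The hard part will be proving that the truncated, locally-relabeled rule still guarantees constant-fraction progress and worst-case locality after only $c$ rounds; once that is established, applying the same truncated sweep to the randomized algorithm, where constant-radius coin flips already break symmetry in expectation, removes the factor entirely and improves its span from $O(\log n \log^* n)$ to $O(\log n)$.
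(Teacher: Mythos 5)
Your overall architecture --- fix a canonical deterministic selection rule, express each vertex's fate as a function of local level data, and service updates by change propagation --- is genuinely different from the paper's, and the difference is exactly where your proof breaks. Your requirement (iii), worst-case stability of the rule, is asserted but never established, and each mechanism you propose for it fails. Converting a truncated Cole--Vishkin coloring into an independent set by sweeping colors is not a constant-window computation: a vertex's membership depends on a smaller-colored neighbor's decision, which depends on \emph{its} smaller-colored neighbor, and so on along any path of decreasing colors, so the dependence radius is $\Theta(\log^{(c)} k)$ (the number of colors), not $O(1)$; propagated through the levels this already costs at least a $\log^{(c)} k$ factor in worst-case work, losing work efficiency. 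Worse, relabeling each chain with contiguous identifiers makes every decision depend on the vertex's \emph{position} in its chain, and a single edge update splits or merges chains, shifting the positions --- hence potentially the decisions --- of an unbounded number of vertices; this is maximally unstable. Finally, your scheme rakes \emph{all} leaves, which the paper deliberately avoids: a vertex that becomes a leaf is forced to contract, which forces its neighbor not to, which can force the next vertex to contract, a deterministic chain reaction traveling an unbounded distance along a chain. These are not presentation issues: the tension between constant-fraction progress (ii) and constant-radius deterministic selection (iii) is essentially Linial's $\Omega(\log^* n)$ lower bound for symmetry breaking on paths (a point the paper itself makes in its conclusion), so ``the hard part'' you defer --- proving a truncated constant-round local rule still guarantees both --- cannot be established as stated.

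The paper resolves this by abandoning canonicity rather than strengthening it. The invariant maintained is that the contracted set at each level is \emph{some} maximal independent set of degree-one and degree-two vertices (leaves are not forced to rake); an update greedily \emph{patches} the old contraction, re-deciding only affected vertices via a fresh constrained MIS that must not conflict with unaffected vertices' prior decisions. Worst-case locality then comes not from locality of a rule but from a structural analysis: ``affected'' is defined with a third, non-obvious case (a non-contracting vertex all of whose contracting neighbors are affected becomes affected ``by dependence''), and the paper proves each affected component induces a subtree with at most two frontier vertices, gains at most four vertices per round, and hence remains of constant size (at most $26$), giving worst-case $O(k)$ affected vertices per level --- precisely the fact your charging argument needs but cannot obtain from a local rule. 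Separately, even granting $O(k)$ affected vertices per level, your span accounting omits the cost of identifying and compacting them each round (deterministic approximate compaction costs $O(\log \log n)$ span, exact filtering $O(\log k)$); the paper requires a dedicated multi-phase construction, exploiting the constant size of affected components to filter within components in $O(1)$ span, to remove this bottleneck and reach $O(\log n \log^{(c)} k)$.
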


\noindent The resulting RC-Tree is amenable to all existing query algorithms for parallel RC-Trees and hence it can solve batch connectivity queries, batch subtree queries, batch path queries, batch LCA queries and non-batched diameters, centers and medians~\cite{acar2005experimental}.

\begin{corollary}
    There are deterministic parallel batch-dynamic algorithms for
    \begin{enumerate}
        \item Dynamic forest connectivity
        \item Subtree sums of an associative commutative operation over vertices or edges
        \item Path sums of an invertible associative commutative operation over vertices or edges
        \item Path minimum- or maximum-weight edge
        \item Lowest common ancestors
    \end{enumerate}
    running in $O(k \log(1+n/k))$ work and $O(\log n)$ span
    for batches of $k$ queries.
\end{corollary}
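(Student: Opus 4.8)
The plan is to obtain the Corollary directly from the Theorem: once the balanced RC-Tree is maintained deterministically, each listed query is answered by an existing RC-Tree query procedure, and the randomness in prior work entered only through construction and maintenance, never through querying a fixed tree. It therefore suffices to verify that each query type is supported with the claimed bounds on a balanced RC-Tree of depth $O(\log n)$, which is exactly the structure the Theorem guarantees.

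First I would recall the cluster-augmentation scheme underlying RC-Tree queries. Every cluster stores a value maintained under the relevant associative (and, where stated, commutative) operation: a representative for connectivity, a subtree aggregate for subtree sums, and a path aggregate for path queries. Connectivity reduces to testing whether two leaf clusters reach the same root cluster; a subtree sum is recovered by combining a constant number of partial aggregates encountered on the root path; a query for the path between $u$ and $v$ decomposes that path into $O(\log n)$ clusters whose stored aggregates are combined, with invertibility used to cancel overlaps for the sum variant and plain semigroup combination sufficing for the min/max variant; and LCA is handled by fixing a root and using the standard RC-Tree meet/ancestor computation. In each case a single query reads only clusters lying on $O(1)$ root-to-leaf paths, hence touches $O(\log n)$ clusters.

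Second, for work efficiency I would invoke the standard path-union bound: in a balanced tree with $n$ leaves and depth $O(\log n)$, the union of the root-to-leaf paths of $k$ marked leaves has total size $O(k \log(1 + n/k))$. Consequently the clusters relevant to a batch of $k$ queries can be gathered, deduplicated, and aggregated together in $O(k \log(1+n/k))$ work. Since every relevant path has length $O(\log n)$ and the $k$ queries are mutually independent, the induced sub-forest is processed level by level in parallel --- bottom-up to form the aggregates and, where a query requires it, top-down to read them off --- yielding $O(\log n)$ span. All of this uses only deterministic parallel primitives (sorting or semisorting by cluster identity together with parallel prefix and reduction over the sub-forest) and no concurrent writes.

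The main obstacle is the work bound rather than correctness. Answering the queries independently costs $O(\log n)$ each, for $O(k \log n)$ total, which is not work efficient when $k$ is large; collapsing this to $O(k \log(1+n/k))$ requires exploiting the sharing of clusters among nearby queries through the path-union argument, together with a deterministic, concurrent-write-free mechanism that aggregates each shared cluster exactly once. A secondary technicality is ensuring that the path and LCA decompositions use the correct constant number of cluster fragments under a consistent rooting; given the balanced $O(\log n)$-depth structure supplied by the Theorem, these reduce to bookkeeping over the induced sub-forest.
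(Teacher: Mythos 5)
Your proposal is correct and follows essentially the same route as the paper: the paper derives this corollary by observing that the deterministically maintained RC-Tree is balanced and is therefore amenable to all existing (deterministic) parallel RC-Tree query algorithms of Acar et al., which already achieve $O(k \log(1+n/k))$ work and $O(\log n)$ span for batches of $k$ queries. The details you reconstruct --- the path-union bound on shared clusters and level-by-level batch processing --- are precisely the standard arguments the paper delegates to that cited prior work.
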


\noindent We start by developing a simpler algorithm that is work efficient but runs in $O(\log n \log \log k)$ span, then, in Section~\ref{sec:improved-span}, we discuss how to optimize the span. As a byproduct of our span optimization,
we also optimize the randomized variant of the problem and obtain the following improved result, which improves over the $O(\log n \log^* n)$ span of Acar et al.~\cite{acar2020batch}.

\begin{theorem}
There is a randomized parallel batch-dynamic algorithm that maintains a balanced Rake-Compress Tree (RC-Tree) of a bounded-degree forest subject to updates of $k$ edges in $O(k \log(1+n/k))$ work and $O(\log(n))$ span.
\end{theorem}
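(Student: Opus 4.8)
The plan is to take the randomized batch-dynamic tree-contraction algorithm as a black box for everything except its per-round scheduling step, and then re-implement that single step using the span-optimization framework from Section~\ref{sec:improved-span}. Recall that the randomized contraction maintains the forest as a sequence of $O(\log n)$ rounds (and that this round count holds with high probability), and that a batch of $k$ edge updates perturbs a collection of clusters whose total size over all rounds is $O(k \log(1+n/k))$; this already yields the desired work bound. The only thing preventing an $O(\log n)$ span bound is that the original analysis charges $O(\log^* n)$ span per round to the step that decides, among the affected clusters at that round, which ones to rake or compress, and that assigns processors to them. Multiplying this per-round cost by the $O(\log n)$ rounds is exactly what produces the $O(\log n \log^* n)$ bound of Acar et al.

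First I would isolate this per-round step as an instance of the same symmetry-breaking-and-scheduling problem that the deterministic algorithm solves with the framework of Section~\ref{sec:improved-span}. In the deterministic setting that framework reduces the problem to a residual whose size is governed by a $c$-fold size reduction, leaving the $\log^{(c)} k$ factor; the key point is that the framework is agnostic to how the base case is resolved. I would therefore instantiate it with a \emph{randomized} base case: a single round of independent coin flips over the eligible clusters (the standard Miller--Reif compress rule, together with the usual rake rule) selects an independent set containing a constant fraction of the eligible clusters, and does so in $O(1)$ span and work linear in the number of affected clusters, both with high probability. Because randomization resolves the base case in one round rather than the $c$ iterated rounds that the deterministic method requires, there is no $\log^{(c)} k$ residual, and the per-round span collapses to $O(1)$ with high probability.

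Combining the pieces, the span is $O(\log n)$ rounds times $O(1)$ span per round, i.e.\ $O(\log n)$ with high probability, while the work is unchanged at $O(k \log(1+n/k))$ since the coin-flipping and scheduling add only work linear in the per-round affected-set sizes, which already telescope to the work bound. I would finish by checking that the resulting contraction still produces a \emph{balanced} RC-Tree, i.e.\ that the randomized rake/compress rule contracts the dynamically maintained forest in $O(\log n)$ rounds with high probability, so the claimed round count is justified on the updated instance rather than only on a freshly built one.

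The main obstacle I anticipate is the probabilistic bookkeeping rather than any new structural idea. I must establish the $O(1)$-span and constant-fraction-contraction guarantees simultaneously across all $O(\log n)$ rounds and across every cluster touched by the batch, which calls for a union bound over $\mathrm{poly}(n)$ events and hence for the per-round guarantees to hold with probability $1 - 1/\mathrm{poly}(n)$ rather than merely in expectation. A secondary subtlety is ensuring that the randomized per-round scheduling does not covertly reintroduce an iterated-logarithmic term through processor allocation on the variable-sized affected sets; I would control this using the same work-efficient allocation bound already employed in the deterministic construction, now fed the $O(1)$-span randomized base case.
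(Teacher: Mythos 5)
Your proposal assembles the right two ingredients---the span-optimization framework of Section~\ref{sec:improved-span} and Miller--Reif coin-flip selection---but it wires them together based on a misdiagnosis of where the $O(\log^* n)$ factor in Acar et al.\ actually comes from, and that misdiagnosis is fatal to the architecture you describe. You claim the bottleneck is ``the step that decides, among the affected clusters at that round, which ones to rake or compress, and that assigns processors to them.'' In the randomized algorithm that step is already $O(1)$ span: each vertex flips a coin, leaves rake, and a degree-two vertex compresses when it flips heads and both neighbors flip tails. The paper is explicit that the $\log^* n$ factor instead comes from the \emph{approximate compaction} used each round to maintain the set of affected vertices work-efficiently (randomized approximate compaction costs $O(\log^* n)$ span). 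Since your plan is to treat the randomized algorithm ``as a black box for everything except its per-round scheduling step,'' the compaction stays inside the untouched black box, and the span of your construction remains $O(\log n \log^* n)$: you would effectively be replacing an $O(1)$-span coin flip with another $O(1)$-span coin flip.

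The paper's argument is the mirror image of yours: it keeps Acar et al.'s coin-flip selection and their definition of affected vertices (which already give constant-span symmetry breaking and the $O(k\log(1+n/k))$ expected-work bound), and it replaces the global approximate compaction with the technique of Section~\ref{sec:improved-span}: affected vertices are grouped into affected components of \emph{constant size} (cf.\ Lemma~\ref{lem:affected-component-size}), so filtering out no-longer-affected vertices is done locally per component in $O(1)$ work and span, with exact (not approximate) filtering on small groups in the middle phase. Relatedly, your description of the deterministic framework as ``reduc[ing] the problem to a residual \ldots leaving the $\log^{(c)} k$ factor'' conflates two separate components: that factor comes from the deterministic MIS subroutine (the Goldberg--Plotkin coloring of Lemma~\ref{lem:mis-in-constant-degree-graph}), not from the compaction-avoidance framework, which contributes no iterated-logarithmic residual at all. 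Your closing remark about processor allocation on variable-sized affected sets gestures at the real issue, but to repair the proposal you must invert it: randomization replaces the MIS, and the framework replaces the compaction---not the other way around.
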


\noindent Randomized parallel batch-dynamic trees have already been used as key ingredients in several other parallel batch-dynamic algorithms. As an additional result, existing algorithms for fully dynamic connectivity~\cite{acar2019parallel} and incremental minimum spanning trees~\cite{anderson2020work} can be derandomized by using our deterministic RC-Tree, assuming that the underlying graph is ternarized (transformed to constant degree~\cite{frederickson1985data}). Details are given in the appendix.

\begin{theorem}
There is a deterministic parallel batch-dynamic algorithm which, given batches of $k$ edge insertions, deletions, and connectivity queries on a bounded-degree graph
processes insertions and deletions in $O(k \log n \log(1 + n/\Delta) \alpha(n,m))$ amortized work and $O(\polylog n)$ span, and answers
queries in \mbox{$O(k \log(1 + n/k))$} work and $O(\polylog n)$ span, where $\Delta$ is the average batch size of all deletion operations.
\end{theorem}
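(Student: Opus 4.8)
The plan is to take the existing randomized parallel batch-dynamic connectivity algorithm of Acar et al.~\cite{acar2019parallel}, isolate the two places where it relies on randomization, and replace each with a deterministic counterpart of matching asymptotic cost. First I would recall the structure of that algorithm, which follows the level-hierarchy framework of Holm, de Lichtenberg and Thorup~\cite{holm2001poly}: it maintains $O(\log n)$ levels, each storing a spanning forest of a nested subgraph together with its non-tree edges, and it uses a batch-dynamic tree data structure both to represent these forests and to answer the connectivity queries. The only uses of randomness are (i) the randomized RC-Tree that maintains each level's spanning forest and supports the connectivity and path queries, and (ii) a randomized parallel spanning-forest subroutine invoked within a level to assemble forests from batches of inserted edges and to locate replacement edges after deletions.

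For (i) I would substitute the deterministic RC-Tree of Theorem~1, which exposes exactly the interface required by the connectivity algorithm---batch connectivity, subtree, and path queries (Corollary~1)---at the same $O(k \log(1 + n/k))$ work, but now with deterministic worst-case guarantees and $O(\polylog n)$ span. The single precondition is that the RC-Tree operates on a bounded-degree forest. Since a batch of insertions can raise a vertex's degree arbitrarily, this hypothesis cannot simply be assumed on the static input; instead I would maintain a dynamic ternarization in the sense of Frederickson~\cite{frederickson1985data}, representing each original vertex by a dynamically-updated tree of degree-three gadget vertices joined by zero-weight edges. An original edge update then maps to $O(1)$ gadget-edge updates per endpoint, the vertex and edge counts grow by only a constant factor, and all connectivity relationships are preserved, so the asymptotic bounds are unaffected.

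For (ii) I would substitute the deterministic parallel spanning-forest algorithm of Cole et al.~\cite{cole1991approximate}, which runs in $O(m\,\alpha(n,m))$ work and $O(\polylog n)$ span; this substitution is precisely the source of the $\alpha(n,m)$ factor in the stated work bound. With both randomized components replaced, I would re-examine the amortized analysis of the level hierarchy. That analysis charges work to increases of edge levels, a purely combinatorial accounting that is independent of whether the underlying tree and forest operations are randomized or deterministic; it therefore carries over verbatim once each per-operation cost is replaced by its deterministic worst-case analogue. Summing the deterministic RC-Tree batch cost---$O(\log(1 + n/\Delta))$ per edge for an average deletion batch of size $\Delta$---over the $O(\log n)$ levels and multiplying by the spanning-forest overhead $\alpha(n,m)$ yields the claimed $O(k \log n \log(1 + n/\Delta)\,\alpha(n,m))$ amortized work for insertions and deletions, while connectivity queries touch only the top-level tree and cost $O(k \log(1 + n/k))$ as in Corollary~1.

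The main obstacle is confirming that randomization is genuinely confined to these two black boxes and is not exploited incidentally elsewhere in the algorithm---for example in hash-based adjacency structures used to store the non-tree edges incident to each vertex at each level, or in any sampling step used to accelerate replacement-edge search. Wherever such a primitive appears I would replace it with a deterministic comparison-sorting or prefix-sum-based structure; these substitutions may introduce an additional logarithmic factor, but since the span claim is only $O(\polylog n)$ it is safely absorbed. The analogous derandomization of the incremental minimum-spanning-tree algorithm of Anderson et al.~\cite{anderson2020work} proceeds along identical lines, replacing its randomized dynamic tree by Theorem~1 and its randomized spanning-forest step by~\cite{cole1991approximate}; both derivations are deferred to the appendix.
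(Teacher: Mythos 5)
Your proposal is correct and takes essentially the same route as the paper: derandomize the Acar et al.\ connectivity algorithm by swapping its randomized batch-dynamic tree for the deterministic RC-Tree (the paper notes the original actually uses Euler Tour trees, not randomized RC-Trees), replacing the semisort/grouping steps with deterministic sorting, and replacing the randomized spanning-forest subroutines with the deterministic algorithm of Cole et al., which is exactly the paper's source of the $\alpha(n,m)$ factor. The only minor differences are that the paper simply \emph{assumes} the input is ternarized/bounded-degree (as the theorem statement does) rather than maintaining a dynamic ternarization, and that the sorting overhead should be justified against the work bound $O(k \log n \log(1 + n/\Delta))$ (an $O(k\log n)$ sort fits within it) rather than absorbed into the span claim.
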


\begin{theorem}
There is a deterministic parallel batch-incremental algorithm which maintains the minimum spanning forest of a weighted undirected bounded-degree graph
given batches of $k$ edge insertions in $O(k \log (1 + n/k) + k \log \log n)$ work and $O(\polylog n)$ span.
\end{theorem}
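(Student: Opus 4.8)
The plan is to derandomize the existing randomized batch-incremental minimum spanning forest (MSF) algorithm of Anderson et al.~\cite{anderson2020work} by substituting our deterministic RC-Tree (Theorem~1) for the randomized batch-dynamic tree that it uses as a black box. First I would recall the structure of that reduction: to process a batch of $k$ edge insertions, it sorts the batch by weight, identifies the edges that join currently-distinct components and links them into the spanning forest, and for each remaining edge---which closes a cycle in the current forest---it issues a path maximum-weight-edge query to decide whether the cycle rule forces a swap, realizing each swap as a cut of the heaviest cycle edge together with a link of the new edge. Because the setting is insertion-only, a swapped-out edge can never re-enter the MSF, so no auxiliary structure over non-tree edges is needed; the entire computation is expressed through batched link, cut, and path-max primitives on the evolving forest, plus the initial sort.

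The key observation is that the only randomized ingredients in this reduction are the underlying tree structure and the sorting step. Since the input graph has bounded degree, its spanning forest has bounded degree at every step, so the forest always lies in the domain of our deterministic RC-Tree. I would therefore replace the randomized tree with the structure of Theorem~1 and invoke Corollary~1 for the path maximum-weight-edge queries. The batched links, cuts, and queries then run deterministically in $O(k \log(1 + n/k))$ work and $O(\polylog n)$ span, matching the randomized bound termwise.

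The remaining step is the sort of the batch by weight, which the randomized algorithm can perform in $O(k)$ work. To stay deterministic I would instead use a deterministic parallel integer sort; with weights drawn from a range polynomial in $n$, this costs $O(k \log\log n)$ work in $O(\polylog n)$ span, which is exactly the source of the additive $k \log\log n$ term in the stated bound. Summing the two deterministic components yields total work $O(k \log(1 + n/k) + k \log\log n)$ and span $O(\polylog n)$, as claimed.

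The main obstacle I anticipate is not the asymptotics but verifying that our deterministic RC-Tree exposes precisely the batched interface the MSF reduction assumes---in particular that batches of links, cuts, and path-max queries can be interleaved within a single round while preserving the bounded-degree invariant---and confirming that no step of Anderson et al.'s algorithm relies on randomness beyond the tree and the sort. Since these are largely a faithful transcription of the existing reduction against our interface, I would relegate their verification to the appendix.
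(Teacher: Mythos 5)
There is a genuine gap here: you have misidentified where the residual randomness in Anderson, Blelloch, and Tangwongsan's algorithm lives, and consequently your derivation of the $k \log\log n$ term is wrong. Their batch-insertion algorithm does not process the $k$ new edges by independent path-max queries and swaps---that per-edge cycle-rule treatment is incorrect for a batch, because simultaneous insertions interact (two new edges whose fundamental cycles share tree edges cannot each independently evict ``the'' heaviest edge on their path; after one swap the other edge's cycle changes). Instead, their algorithm compresses the tree paths between batch endpoints into an auxiliary graph of size $O(k)$ (new edges plus path segments weighted by path maxima) and computes a \emph{minimum spanning tree of that auxiliary graph} to decide globally which edges enter and leave. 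That MST computation is done with Cole, Klein, and Tarjan's linear-work \emph{randomized} parallel MST algorithm, and this subroutine---not a sort---is the second source of randomness. The paper's proof replaces it with the deterministic parallel MST algorithm of Chong et al.~\cite{chong2003improving}, and the $O(k \log\log n)$ additive term is exactly the price of that deterministic MST computation on the $O(k)$-size auxiliary graph.

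Your sorting-based explanation does not hold up on its own terms either: no algorithm, randomized or not, comparison-sorts $k$ keys in $O(k)$ work, and for large $k$ (say $k = \Theta(n)$) a comparison sort costs $\Theta(k \log k)$, which exceeds the permitted $O(k \log(1+n/k) + k\log\log n)$; appealing to deterministic integer sorting smuggles in an integer-weight assumption that the theorem statement does not make. The part of your proposal that substitutes the deterministic RC-Tree of Theorem~1 (with ternarization/bounded degree) for the randomized batch-dynamic tree is correct and matches the paper~\cite{anderson2020work}; what is missing is the identification and derandomization of the MST subroutine, which is the entire content of the $k\log\log n$ term.
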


\myparagraph{Overview} Previous implementations of RC-Trees are all based on applying self-adjusting computation to randomized parallel tree contraction~\cite{acar2004dynamizing,acar2005experimental},
i.e., a static tree contraction algorithm is implemented in a framework that automatically tracks changes to the input values, and selectively recomputes procedures that depend on changed data. This results in the RC-Tree that would have been obtained if running the static algorithm from scratch on the updated data. In
our case, we instead describe a direct update algorithm that is not based on self-adjusting computation. We describe a variant of tree contraction that deterministically
contracts a maximal independent set (MIS) of degree one and two vertices each round. When an update is made to the forest, we identify the set of \emph{affected vertices}
(this part is similar to change propagation) and then greedily update the tree contraction by computing an MIS of affect vertices and updating
the contraction accordingly.  The key insight is in correctly establishing the criteria for vertices being affected such that the update is correct, and
bounding the number of such vertices so that it is efficient.

\begin{figure*}[th]
\centering
\begin{subfigure}{0.49\textwidth}
  \centering
  \includegraphics[width=0.99\textwidth]{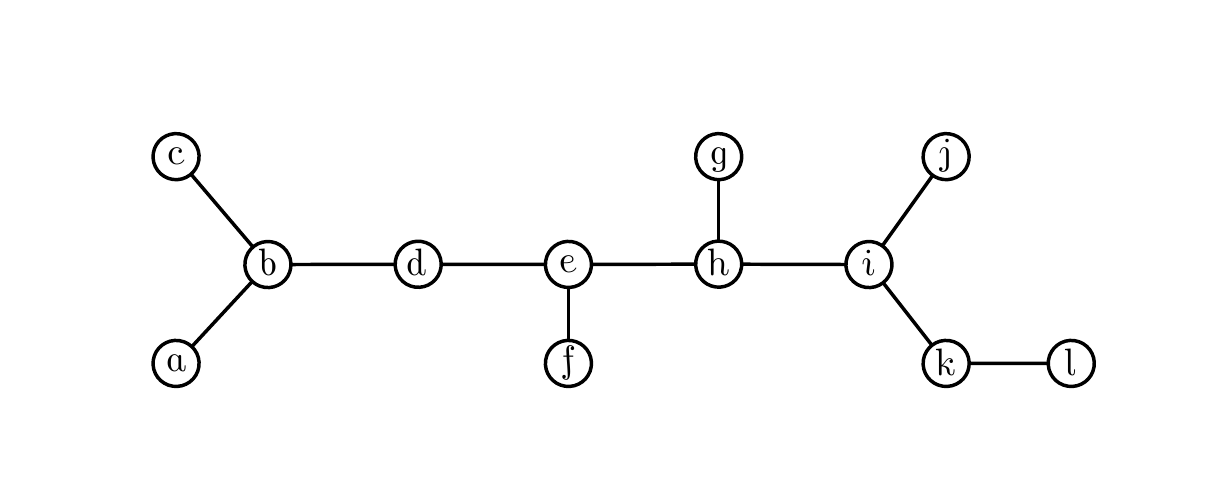}
  \caption{An unrooted tree}
\end{subfigure}
\begin{subfigure}{0.5\textwidth}
  \centering
  \includegraphics[width=0.99\textwidth]{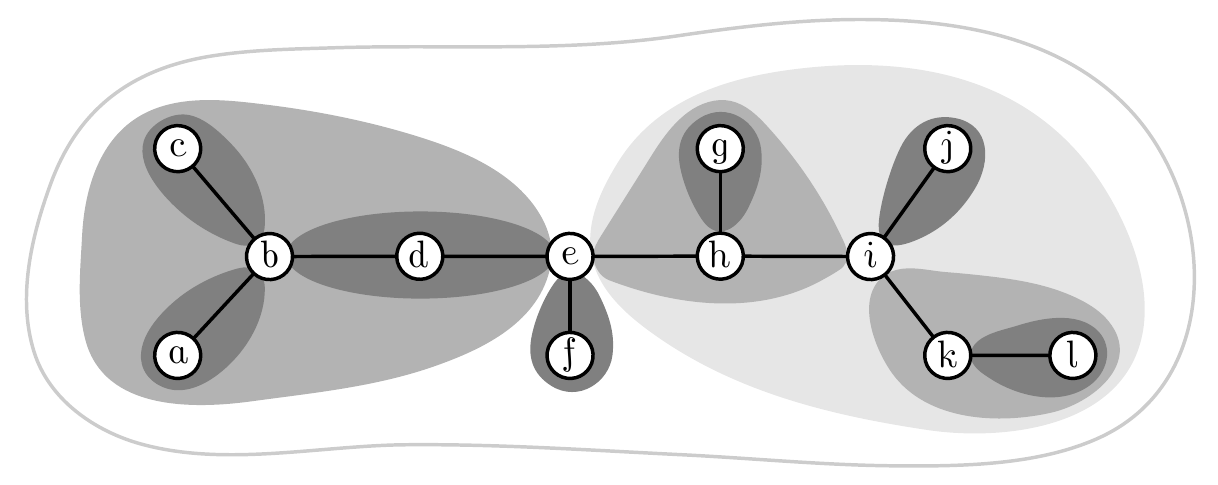}
  \caption{A recursive clustering of the tree produced by tree
    contraction. Clusters produced in earlier rounds are depicted in
    a darker color.}
\end{subfigure}

\bigskip

\begin{subfigure}{0.8\textwidth}
  \centering
  \includegraphics[width=0.85\textwidth]{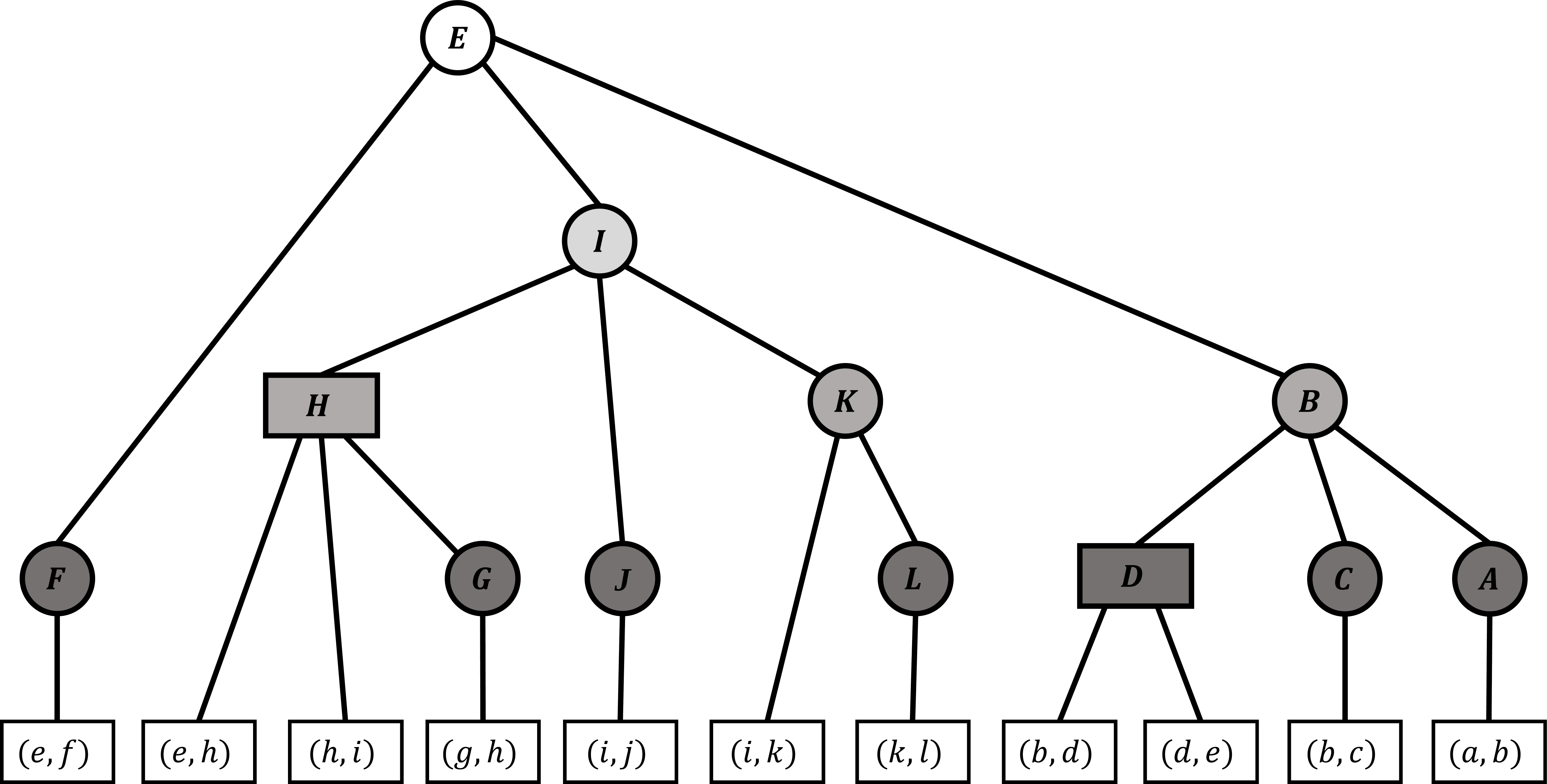}
  \caption{The corresponding RC-Tree. Clusters produced from rakes (and the root cluster) are shown as filled circles, and clusters produced from compress as rectangles. The base clusters (edges of the original tree) are labeled in lowercase, and the composite clusters are labeled with the uppercase of their representative vertex. The shade of a cluster corresponds to its height in the clustering. Lower heights (i.e., contracted earlier) are darker.}
\end{subfigure}
\caption{A tree, a clustering, and the corresponding RC-Tree \cite{acar2020batch}.}\label{fig:rc-tree}
\end{figure*}

\section{Preliminaries}

\subsection{Model of computation}

\myparagraph{Parallelism}
We analyze algorithms in the shared-memory \emph{work-span} model with \emph{fork-join} parallelism. A procedure can \emph{fork} a set of procedures to run in parallel and wait for forked procedures to complete with a \emph{join}. The work of the algorithm is the total number of instructions performed, and the span (also called depth or parallel time) is the length of the longest chain of sequentially dependent instructions.
An alternative model is the classic PRAM model, which consists of $p$ processors that execute in lock step with access to a shared memory. An algorithm that runs in $t$ steps on a $p$-processor PRAM performs $O(pt)$ work in $O(t)$ span. Our algorithms are naturally described as fork-join programs, but we will use many classic PRAM algorithms as subroutines.

\myparagraph{Concurrency}
A parallel computation may be endowed with the ability to read/write the shared memory concurrently. In the classic PRAM model, algorithms are classified into one of three categories: \textsc{EREW} (\emph{Exclusive-Read Exclusive-Write}, i.e., no concurrency allowed at all), \textsc{CREW} (\emph{Concurrent-Read Exclusive-Write}, i.e., concurrent reads are allowed but concurrent writes are prohibited), and \textsc{CRCW} (Concurrent-Read Concurrent-Write, i.e., concurrent writes are permitted). The \textsc{CRCW} model may be further subdivided by the behaviour of concurrent writes. The \emph{Common} \textsc{CRCW PRAM} permits concurrent writes but requires that all concurrent writes to the same location write the same value, else the computation is invalid. The \emph{Arbitrary} \textsc{CRCW PRAM} permits concurrent writes of different values and specifies that an arbitrary processor's write succeeds, but the algorithm may make no assumption about which processor succeeds. Lastly, the \emph{Priority} \textsc{CRCW PRAM} assigns fixed unique
priorities to each processor, and specifies that the highest priority processor's write succeeds.

\subsection{Colorings and maximal independent sets}

Parallel algorithms for graph coloring and maximal independent sets are well studied~\cite{luby1985simple,jung1988parallel,cole1986approximate,goldberg1987parallel,goldberg1987parallel2}. Our algorithm uses a subroutine that finds a maximal independent set of a collection of chains, i.e., a set of vertices of degree one or two. Goldberg and Plotkin~\cite{goldberg1987parallel2} give an algorithm that finds an $O(\log^{(c)}(n))$-coloring of a constant-degree graph in $O(c\cdot n)$ work and $O(c)$ span in the EREW model. This gives a constant coloring in $O(n \log^* n)$ work and $O(\log^* n)$ span.

Given a $c$-coloring, one can easily obtain a maximal independent set as follows. Sequentially, for each color, look at each vertex of in parallel. If a vertex is the current color and is not adjacent to a vertex already selected for the independent set, then select it. This takes $O(c \cdot n)$ work and $O(c)$ span, hence a constant coloring yields an $O(n)$ work and constant span algorithm.

For any choice of coloring, the above algorithms do not yield a work-efficient algorithm for maximal independent set since it is not work efficient to find a constant coloring, and not work efficient to convert a non-constant coloring into an independent set. To make it work efficient, we can borrow a trick from Cole and Vishkin. We first produce a $\log^{(c)} n$ coloring in $O(1)$ time, then bucket sort the vertices by color, allowing us to perform the coloring-to-independent-set conversion work efficiently. Cole and Vishkin~\cite{cole1986deterministic} show that the bucket sorting can be done efficiently in the EREW model. This results in an $O(n)$ work, $O(\log^{(c)} n)$ span algorithm for maximal independent set in a constant-degree graph.

\begin{lemma}[\cite{goldberg1987parallel2,cole1986deterministic}]\label{lem:mis-in-constant-degree-graph}
    There exists a deterministic algorithm that finds an MIS in a constant-degree graph in $O(n)$ work and $O(\log^{(c)} n)$ span for any
    constant $c$ in the EREW model.
\end{lemma}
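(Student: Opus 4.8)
The plan is to combine two ingredients that the preceding discussion has already laid out: Goldberg and Plotkin's fast but mildly work-inefficient coloring algorithm, and the Cole--Vishkin bucket-sorting trick that converts a coloring into a maximal independent set work-efficiently. The theorem I want to establish is the existence of a deterministic EREW algorithm computing an MIS of a constant-degree graph in $O(n)$ work and $O(\log^{(c)} n)$ span for any constant $c$. I would structure the argument in two phases, coloring followed by the coloring-to-MIS conversion, tracking the work and span contributions of each.

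First I would invoke the Goldberg--Plotkin result to produce a $\log^{(c)} n$-coloring. Their algorithm colors a constant-degree graph with $O(\log^{(j)} n)$ colors in $O(j \cdot n)$ work and $O(j)$ span; taking $j$ to be a constant depending on the desired $c$ gives a $\log^{(c)} n$-coloring in $O(n)$ work and $O(1)$ span. The key point to emphasize here is that we deliberately stop short of a constant coloring, since driving the color count all the way down to $O(1)$ is exactly what costs the extra $\log^*$ factor in work; halting at $\log^{(c)} n$ colors keeps the work linear while the span remains constant.

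Second I would handle the conversion from a coloring to an MIS. The naive sequential-over-colors sweep selects, for each color class in turn and in parallel over that class, every vertex not yet adjacent to a chosen vertex; this is correct and produces a maximal independent set, but a direct implementation spends $\Theta(n)$ work per color and thus $\Theta(n \log^{(c)} n)$ work overall, which is not work-efficient. The fix is to first bucket-sort the vertices by their color so that each color class is stored contiguously; then processing color class $i$ touches only the $n_i$ vertices of that class rather than all $n$ vertices, and $\sum_i n_i = n$, yielding $O(n)$ total work across all $\log^{(c)} n$ colors. The span is $O(\log^{(c)} n)$ from the sequential loop over the color classes (plus the span of the sort), and each adjacency check is $O(1)$ since the degree is constant. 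I would cite Cole--Vishkin for the fact that the bucket sort itself runs work-efficiently within the EREW model so that no concurrent-write capability is needed anywhere.

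The main obstacle, and the step deserving the most care, is the work-efficiency of the bucket-sorting conversion in the EREW model. The subtlety is twofold: the sort must avoid concurrent writes (distinct vertices of the same color must not contend for a single counter or bucket slot), and the subsequent per-color parallel selection must read the neighbors' "chosen" status without concurrent-read conflicts. The first is exactly what the Cole--Vishkin technique addresses and is what lets us claim the EREW bound rather than a weaker CREW or CRCW bound; the second follows from bounded degree, since a vertex has only $O(1)$ neighbors to inspect and we can serialize those $O(1)$ reads without asymptotic penalty. Once these EREW details are in place, summing the $O(n)$-work, $O(1)$-span coloring phase with the $O(n)$-work, $O(\log^{(c)} n)$-span conversion phase gives the claimed bounds and completes the proof.
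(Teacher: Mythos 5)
Your proposal matches the paper's approach exactly: a Goldberg--Plotkin $\log^{(c)} n$-coloring in $O(n)$ work and $O(1)$ span, followed by the Cole--Vishkin bucket-sorting trick to convert the coloring into an MIS work-efficiently via a sequential sweep over color classes, each processed in parallel. Your additional attention to the EREW details (avoiding write contention in the sort and handling the $O(1)$ neighbor reads) is consistent with what the paper delegates to the cited works, so the argument is correct and essentially identical to the paper's.
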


\subsection{Filtering and compaction}

A \emph{filter} is a fundamental primitive used in many parallel algorithm. Given an input array $A$ of length $n$ and a predicate $p$,
a filter returns a new array consisting of the elements of $A$ that satisfy the predicate. Filtering can easily be
implemented in linear work and $O(\log n)$ span. This span however is a bottleneck for many applications, so \emph{approximate compaction}
was developed as an alternative. Approximate compaction takes the same input as a filter, such that if
there are $m$ elements that satisfy the predicate, the output is an array of size $O(m)$ containing the elements that satisfy
the predicate, and some blank elements. Goldberg and Zwick~\cite{goldberg1995optimal} give a deterministic approximate compaction algorithm that is work efficient
and runs in $O(\log \log n)$ span in the Common CRCW model.

\begin{lemma}[\cite{goldberg1995optimal}]\label{lem:approximate-compaction}
    There exists a deterministic algorithm that performs approximate compaction on an array of $n$ elements in
    $O(n)$ work and $O(\log \log n)$ span in the Common CRCW model.
\end{lemma}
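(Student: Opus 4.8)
The statement is exactly the deterministic approximate-compaction result of Goldberg and Zwick~\cite{goldberg1995optimal}, so rather than reprove it from scratch I would sketch the route one takes to obtain it. The plan is to reduce approximate compaction to an \emph{approximate prefix-sums} problem and then solve that problem with a doubly-logarithmic-depth recursion on the Common CRCW PRAM. First, evaluate the predicate to obtain a $0/1$ array $b$ of length $n$ with $b[i]=1$ exactly when $A[i]$ is to be kept; this costs $O(n)$ work and $O(1)$ span. It then suffices to assign to each kept element a \emph{distinct} target slot in a range of size $O(m)$, where $m = \sum_i b[i]$: given such an assignment, one allocates an output array of size $O(m)$, writes each kept $A[i]$ to its slot (the writes land in distinct cells, so exclusive writes suffice), and marks the untouched cells blank, all in $O(m)$ work and $O(1)$ span.

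The core task is therefore to compute values $s[i]$ that are nondecreasing, that strictly increase by at least one across every kept element, and that satisfy $s[n] = O(m)$; the $s[i]$ at kept positions are then automatically distinct integers lying in a range of size $O(m)$, and so serve as the slots. I would compute these \emph{approximate} prefix sums recursively. Partition the array into $\sqrt{n}$ blocks of $\sqrt{n}$ consecutive elements, recursively compute approximate local prefix sums and an approximate count for each block, recursively compute approximate prefix sums over the $\sqrt{n}$ block counts to obtain block offsets, and set $s[i]$ to its block's offset plus its local prefix sum. Because the subproblem size drops from $n$ to $\sqrt{n}$ at each level, the recursion bottoms out after $O(\log\log n)$ levels. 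The reason each level can be made shallow is that the Common CRCW model supports the simple aggregations this needs — for instance the OR of a block of bits, and hence approximate existence/counting building blocks — in $O(1)$ span with linear work, so the overall span is $O(\log\log n)$.

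Two points are the technical heart of the argument and where I expect the real difficulty to lie. The first is \emph{error control}: each recursive level introduces a small multiplicative slack in the counts, and over $O(\log\log n)$ levels these could compound to more than a constant factor, blowing the output size past $O(m)$. One must choose the per-level approximation — for example by letting the allowed relative error shrink geometrically with depth — so that the product of all the slacks stays bounded by a fixed constant, while still computing each count fast enough. The second is \emph{work efficiency}: the naive recursion touches every element once per level and therefore spends $O(n\log\log n)$ work. Bringing this down to $O(n)$ requires charging the work of a block to the kept elements it contains rather than to its length, so that (approximately) empty blocks are discarded cheaply; this load-balancing / work-redistribution step is exactly what makes the Goldberg--Zwick analysis delicate. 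Once both are in place, composing the slot assignment with the $O(m)$-span placement yields the claimed $O(n)$ work and $O(\log\log n)$ span.
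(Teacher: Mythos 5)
The paper offers no proof of this lemma at all---it is imported as a black-box citation to Goldberg and Zwick~\cite{goldberg1995optimal}---so your decision to defer to that work and merely sketch its route is exactly the paper's own treatment. Your sketch is also a faithful account of how the cited result is obtained (reduction to approximate prefix sums whose consecutive differences dominate the kept elements, a square-root recursion of depth $O(\log \log n)$ exploiting constant-time OR on the Common CRCW PRAM, with error compounding and work efficiency correctly flagged as the two delicate points), so there is no gap to report.
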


\subsection{Parallel tree contraction}

Tree contraction is a procedure for computing functions over trees in parallel in low span~\cite{miller1985parallel}. It involves repeatedly applying \emph{rake} and \emph{compress} operations to the tree while aggregating data specific to the problem. The rake operation removes a leaf from the tree and aggregates its data with its parent. The compress operation replaces a vertex of degree two and its two adjacent edges with a single edge joining its neighbors, aggregating any data associated with the vertex and its two adjacent edges.

Rake and compress operations can be applied in parallel as long as they are applied to an independent set of vertices. Miller and Reif~\cite{miller1985parallel} describe a linear work and $O(\log n)$ span randomized algorithm that performs a set of rounds, each round raking every leaf and an independent set of degree two vertices by flipping coins. They show that it takes $O(\log n)$ rounds to contract any tree to a singleton with high probability. They also describe a deterministic algorithm but it is not work efficient. Later, Gazit, Miller, and Teng~\cite{gazit1988optimal} improve it to obtain a work-efficient deterministic algorithm with $O(\log n)$ span.

These algorithms are defined for constant-degree trees, so non-constant-degree trees are handled by converting them into bounded-degree equivalents, e.g., by ternerization~\cite{frederickson1985data}.

\subsection{Rake-Compress Trees (RC-Trees)}

The process of tree contraction can alternatively be viewed as a clustering of the underlying tree. A cluster is a connected subset of edges and vertices of the tree. The base clusters are the individual edges and vertices of the tree.  A cluster may contain an edge without containing the endpoints of that edge. Such a vertex is called a \emph{boundary vertex} of the cluster. Every cluster has at most two boundary vertices.

To form a recursive clustering from a tree contraction, we begin with the base clusters and the uncontracted tree. On each round, for each vertex $v$ that contracts via rake or compress (which remember, form an independent set), we identify the set of clusters that share $v$ as a common boundary vertex. These clusters are merged into a single cluster consisting of the union of their contents. We call $v$ the \emph{representative} vertex of the resulting cluster. The boundary vertices of the resulting cluster will be the union of the boundary vertices of the constituents, minus $v$. Clusters originating from rake operations have one boundary vertex and are called
\emph{unary clusters}, and clusters originating from compresses have two boundary vertices and are called \emph{binary clusters}. Since each vertex contracts exactly once, there is a one-to-one mapping between representative vertices of the original tree and non-base clusters.

An RC-Tree encodes this recursive clustering. The leaves of the RC-Tree are the base edge clusters of the tree, i.e., the original edges (note that the base cluster for vertex $v$ is always a direct child of the cluster for which $v$ is the representative, so omitting it from the RC-Tree loses no information). Internal nodes of the RC-Tree are clusters formed by tree contraction, such that the children of a node are the clusters that merged to form it. The root of the RC-Tree is a cluster representing the entire tree.

Queries are facilitated by storing augmented data on each cluster, which is aggregated from the child clusters at the time of its creation. Since tree contraction removes a constant fraction of the vertices at each round, the resulting RC-Tree is balanced, regardless of how balanced or imbalanced the original tree was. This allows queries to run in $O(\log n)$ time for single queries, or \emph{batch queries} to run in $O(k \log(1+n/k))$ work and $O(\log n)$ span.

An example of a tree, a recursive clustering induced by a tree contraction, and the corresponding RC-Tree are depicted in Figure~\ref{fig:rc-tree}.

\section{Algorithms}

\subsection{The data structure}

Our algorithm maintains a \emph{contraction data structure} which serves to record the
process of tree contraction beginning from the input forest $F$ as it contracts to a forest of singletons. The contraction
data structure also contains the RC-Tree as a byproduct.  For each round
in which a vertex is live, the contraction data structure stores an adjacency list for that vertex. Since the forests have bounded degree $t$, each
adjacency list is exactly $t$ slots large.  Each entry in a vertex $v$'s adjacency list is one of
four possible kinds of value:
\begin{enumerate}[leftmargin=*]
    \item Empty, representing no edge
    \item A pointer to an edge of $F$ adjacent to $v$
    \item A pointer to a binary cluster for which $v$ is one of the boundary vertices. This represents an edge
    between $v$ and the other boundary vertex in the contracted tree.
    \item A pointer to a unary cluster for which $v$ is the boundary vertex. This does not represent any edge
    in the contracted tree, but is used to propagate augmented data from the child to the parent.
\end{enumerate}

\noindent At round $0$ (before any contraction), the adjacency list simply stores pointers to the edges adjacent to each vertex.
At later rounds, in a partially contracted tree, some of the edges are not original edges of $F$, but are the result
of a compress operation and represent a binary cluster of $F$ (Case 3), which may contain augmented data (e.g., the sum of the weights in the
cluster, the maximum weight edge on the cluster path, etc, depending on the application). Additionally, vertices that rake
accumulate augmented data inside their resulting unary cluster that needs to be aggregated when their parent cluster is created (Case 4).
Acar et al.~\cite{acar2005experimental} describe many examples of the kinds of augmented data that are used in common applications.

Clusters contain pointers to their child clusters alongside any augmented data. Each composite cluster
corresponds uniquely to the vertex that contracted to form it, so counting them plus the base edge clusters, the RC-Tree contains exactly
$n + m$ clusters. If the user wishes to store augmented data on the vertices, this can be stored on the unique composite cluster for which
that vertex is the representative.

\subsection{The static algorithm}

We build an RC-Tree deterministically using a variant of Miller and Reif's tree contraction algorithm.  Instead of contracting all degree-one vertices (leaves) and an independent set of degree two vertices, we instead contract any maximal independent set of degree one and two vertices, i.e., leaves are not all required to contract.
The reason for this will become clear during the analysis of the update algorithm, but essentially, not forcing leaves to contract reduces the number of vertices that
need to be reconsidered during an update, since a vertex that was previously not a leaf becoming a leaf would force it to contract, which would force its neighbor
to not contract, which may force its other neighbor to contract to maintain maximality. Such a chain reaction is undesirable, and our variant avoids it.

\begin{definition}
    We say that a tree contraction is maximal at some round if the set of vertices that contract form
    a maximal independent set of degree one and two vertices.  A tree contraction is maximal if it is
    maximal at every round.
\end{definition}

\noindent Let us denote by $F_0$, the initial forest, then by $F_i$ for $i \geq 1$, the forest obtained by applying one round
of maximal tree contraction to $F_{i-1}$. To obtain a maximal tree contraction of $F_i$, consider in parallel
every vertex of degree one and two. These are the \emph{eligible} vertices. We find a maximal independent
set of eligible vertices by finding an $O(\log \log n)$ coloring of the vertices using the algorithm of Goldberg
and Plotkin~\cite{goldberg1987parallel2}, then bucket sorting the vertices by colors and selecting a maximal
independent set similar to the algorithm of Cole and Vishkin~\cite{cole1986deterministic}. This takes $O(|F_i|)$
work and $O(\log \log n)$ span. To write down the vertices of $F_{i+1}$, we can apply approximate compaction
to the vertex set of $F_i$, filtering those which were selected to contract. This also takes $O(|F_i|)$ work
and $O(\log \log n)$ span~\cite{goldberg1995optimal}.

As is standard for parallel tree contraction implementations, each vertex writes into its
neighbors' adjacency list for the next round. For vertices that do not contract, they can
simply copy their corresponding entry in their neighbors' adjacency list to the next round.
For vertices that do contract, they write the corresponding cluster pointers into their neighbors'
adjacency lists. For example, if a vertex $v$ with neighbors $u$ and $w$ compresses, it writes
a pointer to the binary cluster formed by $v$ (whose boundary vertices are $u$ and $w$) into the
adjacency list slots of $u$ and $w$ that currently store the edge to $v$. The tree
has constant degree, so identifying the slot takes constant time and suffers
no issues of concurrency.

To build the RC-Tree clusters, it suffices to observe that when a vertex $v$ contracts, the contents of
its adjacency list are precisely the child clusters of the resulting cluster. Hence in constant
time we can build the cluster by aggregating the augmented values of the children and creating
a corresponding cluster.  If at a round, a vertex is isolated (has no neighbors), it creates a root cluster.

\subsection{Dynamic updates}

\begin{algorithm*}[th]
  \caption{Batch update}
  \label{alg:update}
  \footnotesize
  \begin{algorithmic}[1]
    \Procedure{BatchUpdateForest}{$E^+, E^-$}
    \State Update $F_0$, adding edges in $E^+$ and removing edges in $E^-$
    \State Determine affected vertices from the endpoints of $E^+ \cup E^-$
    \For{each level $i$ from $1$ to $\log_{6/5} n$}
        \State\label{line:new_affected} Determine the new affected vertices from the previous set of affected vertices
        \State\label{line:new_mis} Find a maximal set of eligible affected vertices that do not have an unaffected neighbor that contracts in $F_i$
        \State\label{line:update_tree} Obtain $F_{i}'$ from $F_i$ by uncontracting all affected vertices, then contracting the vertices in the new MIS
    \EndFor
    \EndProcedure
  \end{algorithmic}
\end{algorithm*}

\noindent A dynamic update consists of a set of $k$ edges to be added or deleted (a combination of both is valid).
The update begins by modifying the adjacency lists of the $2k$ endpoints of the modified edges. Call this
resulting forest $F_0'$, to denote the forest after the update. The goal of the update algorithm is now
to produce $F_i'$, an updated tree contraction for each level $i$, using $F_i, F_{i-1}$ and $F_{i-1}'$.

\begin{figure}
    \centering
    \includegraphics[width=0.4\columnwidth]{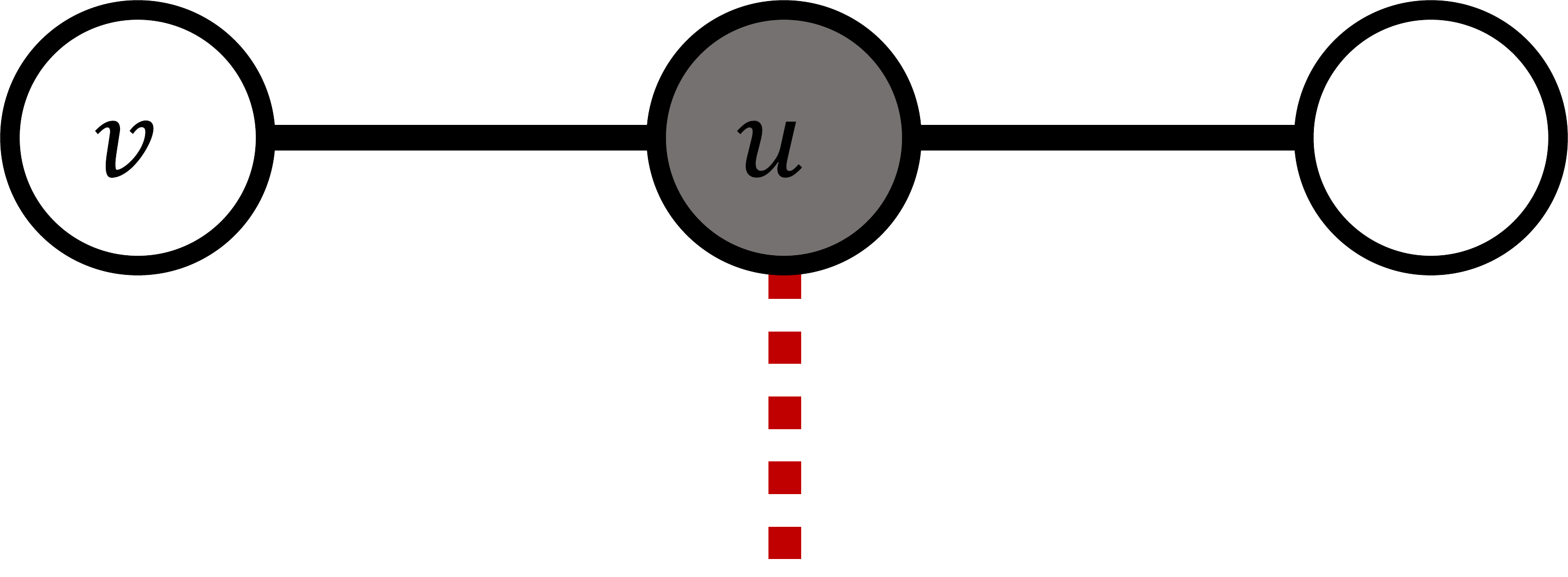}
    \caption{A vertex $v$ is affected because its neighbor $u$ is affected, and $v$ depends on $u$ contracting
    in order for the contraction to be maximal.}
    \label{fig:affect-by-dependence-example}
\end{figure}

\myparagraph{Affected vertices}
To perform the update efficiently, we define the notion of an \emph{affected vertex}.

\begin{definition}[Affected]
    A vertex is \emph{affected} at level $i$ if any of:
    \begin{enumerate}[leftmargin=*]
        \item it is alive in one of $F_i$ and $F_i'$ but not the other (which means it either contracted in an earlier round originally, but survived later after the update, or vice versa).
        \item it is alive in both $F_i$ and $F_i'$ but has a different adjacency list
        \item it is alive in both $F_i$ and $F_i'$, does not contract in $F_i$, but all of its neighbors $u$ in $F_i$ that contracted are affected.
    \end{enumerate}
\end{definition}

\noindent The first two cases of affected vertices are intuitive. If a vertex used to exist at round $i$ but no longer
does, or vice versa, it definitely needs to be updated in round $i$. If a vertex has a different adjacency
list than it used to, then it definitely needs to be processed because it can not possibly contract in the
same manner, or may change from being eligible to ineligible to contract or vice versa.

The third case of affection is more subtle, and is important for the correctness and efficiency of the
algorithm. Suppose an eligible vertex $v$ doesn't contract in round $i$. Since the contraction forms
a maximal independent set, at least one of $v$'s neighbors must contract. If all such neighbors are
affected, they may no longer contract after in the updated forest, which would leave $v$ uncontracted
and without a contracting neighbor, violating maximality. Therefore $v$ should also be considered in
the update. Figure~\ref{fig:affect-by-dependence-example} shows an example scenario where this is
important.

Note that by the definition, vertices only become affected because they have an affected neighbor either in the previous round or the same round. We call this \emph{spreading} affection.

\myparagraph{The algorithm} With the definition of affected vertices, the update algorithm can be summarized as stated in Algorithm~\ref{alg:update}.
Each level is processed sequentially, while the subroutines that run on each level are parallel. Line~\ref{line:new_affected} is implemented
by looking at each affected vertex of the previous round and any vertex within distance two of those in parallel, then filtering those which do not satisfy the definition of affected. Note that by the definition of affected, looking at vertices within distance two is sufficient since at worst, affection can only spread to neighbors and possibly those neighbors' uncontracted neighbors. Using Goldberg and Zwick's approximate compaction algorithm~\cite{goldberg1995optimal}, this step takes linear work in the number of affected vertices and $O(\log \log n)$ span.
In Section~\ref{sec:analysis}, we show that the number of affected vertices at each level is $O(k)$, so this is efficient.

Line~\ref{line:new_mis} is accomplished using Lemma~\ref{lem:mis-in-constant-degree-graph}. This finds a maximal independent set of eligible affected vertices in linear work and $O(\log^{(c)} n)$ span, so we can choose $c = 2$ and use the fact that there are $O(k)$ affected vertices to find the new maximal independent set in $O(k)$ work and $O(\log \log k)$ span.

Line~\ref{line:update_tree} is implemented by looking at each affected vertex in parallel and updating it to reflect its new behaviour in $F_i'$. This entails
writing the corresponding adjacent edges into the adjacency lists of its neighbors in round $i+1$ if it did not contract, or writing the appropriate cluster
values if it did. At the same time, for each contracted vertex, the algorithm computes the augmented value on the resulting RC cluster from the
values of the children.

\myparagraph{Correctness}
We now argue that the algorithm is correct.

\begin{lemma}
    After running the update algorithm, the contraction is still maximal, i.e., the contracted vertices at each level form
    a maximal independent set of degree one and two vertices.
\end{lemma}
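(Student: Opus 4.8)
The plan is to prove that maximality is preserved at every level by induction on the level $i$, establishing two complementary facts at each level: that the contracted set remains an independent set, and that it remains maximal (no eligible vertex could be added). The key structural observation driving the entire argument is the definition of affected vertices, specifically the third case: any vertex that does not change its own behavior is \emph{unaffected}, and unaffected vertices retain exactly the contraction decisions they had in $F_i$. Thus I would argue that for any vertex $v$ deemed unaffected, its entire local neighborhood behavior (whether it contracts, and via which neighbor it is ``covered'' for maximality) is literally unchanged between $F_i$ and $F_i'$, so maximality around $v$ is inherited from the (assumed maximal) original contraction.

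First I would handle independence. The algorithm in Line~\ref{line:new_mis} computes a maximal set of eligible affected vertices that additionally have \emph{no} unaffected contracting neighbor in $F_i$, and Line~\ref{line:update_tree} uncontracts all affected vertices before contracting this new set. I would argue that the resulting contracted set is the disjoint union of (a) the unaffected vertices that contract (inherited unchanged from $F_i$) and (b) the newly selected affected vertices. Independence within each group is immediate---group (a) is a subset of the original independent set, and group (b) is an independent set by construction of the MIS. The only possible conflict is a cross edge between the two groups, but the constraint baked into Line~\ref{line:new_mis} (selected affected vertices have no unaffected \emph{contracting} neighbor) rules this out exactly. Here I must be careful to confirm that an affected vertex and an unaffected contracting neighbor cannot be adjacent in $F_i'$ with both contracting; the constraint is stated with respect to $F_i$, so I would need to note that an unaffected vertex has the same adjacency list and contraction status in $F_i'$ as in $F_i$, making the $F_i$-condition equivalent to the $F_i'$-condition for this purpose.

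Next I would handle maximality, which is the heart of the argument. I must show that no eligible (degree one or two) vertex $v$ in $F_i'$ fails to contract while having no contracting neighbor. I would split into cases by whether $v$ is affected. If $v$ is affected and eligible, then either it was selected by the MIS (so it contracts and we are fine) or it was \emph{not} selected, in which case MIS maximality guarantees it has a selected affected neighbor, \emph{unless} the blocking reason was an unaffected contracting neighbor---but that neighbor then covers $v$, so maximality still holds. If $v$ is unaffected, I would invoke the third case of the affected definition contrapositively: if $v$ were eligible, didn't contract, and all its contracting neighbors in $F_i$ were affected, then $v$ would itself be affected, a contradiction. Hence an unaffected non-contracting eligible $v$ must possess an unaffected contracting neighbor $u$; since $u$ is unaffected it contracts identically in $F_i'$, so $v$ is covered.

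The main obstacle I anticipate is the maximality case for unaffected vertices, because it relies on the precise, somewhat delicate formulation of affected-case~3 and on confirming that ``unaffected'' genuinely transfers the \emph{entire} relevant local configuration from $F_i$ to $F_i'$, including the contraction status of neighbors. In particular I must verify that an unaffected vertex's contracting neighbor is itself still contracting after the update---which requires that the property ``my contracting neighbors are all unaffected'' is what keeps $v$ unaffected, and then that those unaffected neighbors indeed contract unchanged. I would therefore state and use a small supporting claim: \emph{every unaffected vertex has the same alive status, adjacency list, and contraction decision in $F_i'$ as in $F_i$}, which I would justify directly from cases 1 and 2 of the affected definition (negated) together with the induction hypothesis that the contraction of $F_{i-1}'$ agrees with $F_{i-1}$ on all unaffected vertices. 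With that claim in hand, both the independence cross-edge check and the unaffected-maximality case reduce to quoting the original maximality of the from-scratch contraction.
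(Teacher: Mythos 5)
Your proposal is correct and follows essentially the same route as the paper's own proof: both rest on the facts that unaffected vertices retain their adjacency lists and contraction decisions (the paper's Lemma~\ref{lem:unaffected_same_contraction}), that the constraint in Line~\ref{line:new_mis} prevents a selected affected vertex from conflicting with an unaffected contracting neighbor, and that the contrapositive of case~3 of the affected definition supplies an unaffected contracting neighbor for any unaffected, non-contracting, eligible vertex. Your only departure is organizational---splitting the invariant into separate independence and maximality checks with an explicit supporting claim, where the paper verifies a combined per-vertex invariant over eligible unaffected, eligible affected, and ineligible vertices---and your handling of the $F_i$ versus $F_i'$ adjacency subtlety matches the paper's reasoning.
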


\begin{proof}
    Call a vertex \emph{eligible} if it has degree one or two in $F_i'$ and is not adjacent to an unaffected vertex
    that contracts in $F_i$.
    Consider any eligible unaffected vertex $v \in F_i'$.  Since it is unaffected it exists in $F_i$. Suppose $v$
    contracts in $F_i$, then it still contracts in $F_i'$. We need to argue that no neighbor
    of $v$ contracts in $F_i'$. For any unaffected neighbor $u$, it didn't contract in $F_i$ and hence
    still doesn't contract in $F_i'$. If $u$ is an affected neighbor, it is not considered eligible and
    hence does not contract since $v$ is unaffected and contracted. Therefore none of $v$'s neighbors
    contract in $F_i'$.

    Now suppose $v$ doesn't contract in $F_i'$. Since it is unaffected it exists and doesn't contract
    in $F_i$. Therefore it has a neighbor $u \in F_i$ that contracts. If $u$ is unaffected, then $u \in F_i'$
    and contracts. If $u$ is affected, then $v$ is affected by dependence. Therefore all eligible unaffected vertices
    $v \in F_i'$ satisfy the invariant.

    Consider any eligible affected vertex $v \in F_i'$. If $v$ has an unaffected contracted neighbor, then
    $v$ does not contract in $F_i'$, and has a contracted neighbor. Otherwise, $v$ participates in the MIS, and has
    no prior contracted neighbor. Since the algorithm finds a maximal independent set on the candidates, $v$ either contracts and has
    no contracted neighbor, or doesn't contract and has a contracted candidate neighbor. Therefore $v$ satisfies the invariant.

    Together, we can conclude that every eligible vertex satisfies the invariant. Lastly, if a vertex is not eligible,
    then it satisfies the invariant since it either has degree greater than two and hence can not contract, or it is
    adjacent to a vertex that contracts.
\end{proof}

\noindent It remains to show that the algorithm is efficient, which we will do in Section~\ref{sec:analysis}.

\section{Analysis}\label{sec:analysis}

We start by proving some general and useful
lemmas about the contraction process, from which the efficiency of the static algorithm immediately follows,
and which will later be used in the analysis of the update algorithm.

\subsection{Round and tree-size bounds}\label{subsec:round-and-size-bounds}

\begin{lemma}\label{lem:maximal_contraction}
    Consider a tree $T$ and suppose a maximal indpendent set of degree one and two vertices is
    contracted via \emph{rake} and \emph{compress} contractions to obtain $T'$. Then
    \begin{equation}
        |T'| \leq \frac{5}{6}|T|
    \end{equation}
\end{lemma}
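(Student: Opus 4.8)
The plan is to recast the inequality as a lower bound on the number of contracted vertices. Since every vertex in the contracted maximal independent set (MIS) is removed by its rake or compress operation, we have $|T'| = |T| - s$, where $s$ is the size of the MIS, so it suffices to prove $s \geq |T|/6$.

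First I would count vertices by degree. Writing $n = |T|$ and letting $n_1, n_2, n_{\geq 3}$ denote the numbers of vertices of degree one, two, and at least three, the handshake identity on a tree gives $\sum_v \deg(v) = 2(n-1)$. Rearranged as $\sum_v (2 - \deg(v)) = 2$, this yields $n_1 = 2 + \sum_{\deg(v) \geq 3}(\deg(v) - 2) \geq n_{\geq 3} + 2$, so the ineligible vertices are outnumbered by the leaves alone. Consequently $n = (n_1 + n_2) + n_{\geq 3} \leq 2(n_1 + n_2)$, i.e. at least half of all vertices are \emph{eligible} (degree one or two): $n_1 + n_2 \geq n/2$.

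Next I would analyze the structure of the eligible vertices. Each has degree at most two in $T$, so the subgraph of $T$ induced on the eligible vertices has maximum degree two; since $T$ is acyclic, this induced subgraph is a disjoint union of simple paths (counting an isolated eligible vertex as a length-one path). A key observation is that a maximal independent set of eligible vertices in $T$ restricts to a maximal independent set on each of these paths, because two selected eligible vertices are adjacent in $T$ exactly when they are adjacent in the induced subgraph. I would then invoke the standard fact that every maximal independent set (equivalently, independent dominating set) of a path on $\ell$ vertices has size at least $\lceil \ell/3 \rceil \geq \ell/3$. Summing over all the paths shows the MIS contracts at least one third of the eligible vertices, namely $s \geq (n_1 + n_2)/3 \geq n/6$. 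Combining with $|T'| = n - s$ gives $|T'| \leq \frac{5}{6}|T|$, as required.

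The main obstacle is the interface between the two counting arguments: one must verify carefully that maximality of the independent set in $T$ transfers to maximality on each induced path, so that the $\lceil \ell/3 \rceil$ bound legitimately applies, and that the degree identity controls the eligible fraction even for small or degenerate trees (e.g. one must assume $|T| \geq 2$ so that $T$ has at least two leaves). The path bound of $\ell/3$ is precisely what forces the constant $5/6$, leaving no slack; indeed trees built from long chains of degree-two vertices achieve essentially this ratio, confirming the estimate is tight.
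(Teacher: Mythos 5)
Your proof is correct and follows essentially the same route as the paper's: at least half the vertices have degree one or two, a maximal independent set among them must contain at least a third of them (since every run of three consecutive eligible vertices contains a selected one), hence at least a sixth of all vertices contract. The only difference is that you prove the two ingredients the paper merely cites or asserts --- the degree-count via the handshake identity and the one-third bound via the induced-path decomposition with the maximality-transfer check --- which makes the argument self-contained but not a different approach.
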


\begin{proof}
    More than half of the vertices in any tree have degree one or two~\cite{werneck2006design}. An MIS among them is at least one third
    of them since every adjacent run of three vertices must have at least one selected, so
    at least one sixth of the vertices contract. Therefore the new tree has at most $\tfrac{5}{6}$ as many vertices.
\end{proof}

\noindent The lemma also applies to forests since it can simply be applied independently to each component.
Three important corollaries follow that allow us to bound the cost of various parts of the algorithm.
Corollary~\ref{cor:completely-contract} gives the number of rounds required to fully contract a forest,
and Corollary~\ref{cor:contract-to-noverlogn} gives a bound on the number of rounds required to shrink
a forest to size $n / \log n$.  Lastly, Corollary~\ref{cor:contract-to-k} gives a bound on the number of rounds
required to shrink a tree to size $k$, which is useful in bounding the work of the batched update and query algorithms.

\begin{corollary}\label{cor:completely-contract}
    Given a forest on $n$ vertices, maximal tree contraction completely contracts a forest of $n$ vertices in $\log_{6/5} n$ rounds.
\end{corollary}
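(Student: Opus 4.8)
The plan is to derive this directly from Lemma~\ref{lem:maximal_contraction}, which guarantees that each round of maximal tree contraction shrinks the forest to at most $\tfrac{5}{6}$ of its previous size. The corollary is essentially a geometric-decay argument: if $|F_0| \leq n$ and $|F_{i+1}| \leq \tfrac{5}{6}|F_i|$ for every round $i$, then after $r$ rounds we have $|F_r| \leq (5/6)^r n$. To reach a singleton forest (every component contracted to a single vertex), we need the size bound to drop below the point where no further contraction is possible.

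First I would make the induction on rounds explicit, establishing $|F_i| \leq (5/6)^i\, |F_0| \leq (5/6)^i\, n$ by repeatedly applying Lemma~\ref{lem:maximal_contraction} (which, as the remark after that lemma notes, applies to forests component-by-component). Then I would solve for the number of rounds $r$ needed to force $|F_r| < 2$, i.e. down to isolated vertices: setting $(5/6)^r n < 1$ and taking logarithms base $6/5$ gives $r > \log_{6/5} n$, so $r = \lceil \log_{6/5} n \rceil$ rounds suffice. This matches the $\log_{6/5} n$ bound in the statement.

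The one subtlety worth flagging is the terminal behavior of the contraction. Lemma~\ref{lem:maximal_contraction} promises a strict multiplicative shrink whenever a contraction actually happens, but I should confirm that progress does not stall before the forest is fully contracted. The concern is a component that has already shrunk to a single isolated vertex: such a vertex has degree zero, is not eligible (eligibility requires degree one or two), and the static algorithm handles it by creating a root cluster rather than contracting further. So isolated vertices are exactly the already-finished components, and every component with two or more vertices still contains degree-one or degree-two vertices and hence admits a nonempty MIS, so the $\tfrac{5}{6}$ factor continues to apply to it. This is the main point to get right; once it is settled, the size bound drives each multi-vertex component to a singleton within $\log_{6/5} n$ rounds.

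Hence after $\log_{6/5} n$ rounds every component has size one, i.e.\ the forest is completely contracted, which is exactly the claim. I would keep the argument brief, leaning on Lemma~\ref{lem:maximal_contraction} for the per-round shrink and citing only the elementary logarithm computation for the round count.
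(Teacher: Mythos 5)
Your proposal is correct and follows essentially the same route as the paper: the paper's (shared) proof of Corollaries~\ref{cor:completely-contract}--\ref{cor:contract-to-k} simply iterates Lemma~\ref{lem:maximal_contraction} to get $|F_r| \leq n\,(5/6)^r$ and reads off the round bound. Your additional check that progress cannot stall---isolated vertices are finished components, and any component with at least two vertices has leaves and hence a nonempty MIS of eligible vertices---is a sensible elaboration of the same argument rather than a different approach.
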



\begin{corollary}\label{cor:contract-to-noverlogn}
    Given a forest on $n$ vertices, after performing $\log_{6/5} \log n$
    rounds of maximal tree contraction, the number of vertices in the resulting forest is at most $n / \log n$.
\end{corollary}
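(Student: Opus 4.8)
The plan is to apply Lemma~\ref{lem:maximal_contraction} repeatedly and track how the bound compounds. Each round shrinks the forest by a factor of at most $5/6$, so after $r$ rounds the size is at most $(5/6)^r n$. I want to choose $r = \log_{6/5} \log n$ and verify that $(5/6)^r n \leq n / \log n$. The key algebraic observation is that $(5/6)^r = (6/5)^{-r}$, and setting $r = \log_{6/5} \log n$ gives $(6/5)^{r} = \log n$, hence $(5/6)^r = 1 / \log n$. Substituting back yields $(5/6)^r n = n / \log n$, which is exactly the claimed bound.

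Concretely, first I would invoke Lemma~\ref{lem:maximal_contraction} inductively to establish $|F_r| \leq (5/6)^r |F_0| = (5/6)^r n$; this is a straightforward induction where the base case $r = 0$ is trivial and the inductive step applies the lemma once more to $F_{r-1}$. Since the lemma is stated for trees but extends to forests componentwise (as the remark after the lemma notes), this bound holds for the whole forest. Second, I would substitute $r = \log_{6/5} \log n$ and simplify the exponential using the identity above to conclude $|F_r| \leq n / \log n$.

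The only subtlety is that $\log_{6/5} \log n$ need not be an integer, whereas the number of rounds must be. This is handled by taking $r = \lceil \log_{6/5} \log n \rceil$, which only makes the forest smaller (more rounds can only shrink it further, since each round multiplies the size by a factor at most $1$), so the bound $|F_r| \leq n/\log n$ is preserved. I do not expect any genuine obstacle here: the result is an immediate quantitative consequence of the single-round contraction factor from Lemma~\ref{lem:maximal_contraction}, and the entire argument is a short computation once the per-round bound is iterated.
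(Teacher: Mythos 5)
Your proposal is correct and matches the paper's own proof: the paper likewise iterates Lemma~\ref{lem:maximal_contraction} to obtain the bound $n(5/6)^r$ after $r$ rounds and then substitutes $r = \log_{6/5}\log n$. Your extra remark about rounding $r$ up to an integer is a minor point the paper glosses over, but it does not change the argument.
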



\begin{corollary}\label{cor:contract-to-k}
    Given a forest on $n$ vertices and some integer $k \geq 1$, after performing $\log_{6/5}\left(1+n/k\right)$
    rounds of maximal tree contraction, the number of vertices in the resulting forest is at most $k$.
\end{corollary}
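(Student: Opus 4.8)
The plan is to iterate Lemma~\ref{lem:maximal_contraction}, which guarantees that a single round of maximal tree contraction shrinks the forest by at least the constant factor $5/6$. Since that lemma applies to each connected component independently, it applies to the whole forest, so after $r$ rounds of maximal contraction the surviving forest $F_r$ satisfies $|F_r| \le (5/6)^r |F_0| = (5/6)^r n$. This single structural fact is all I need; everything else is a calculation driven by the chosen number of rounds.

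Next I would substitute $r = \log_{6/5}(1 + n/k)$. By the definition of the logarithm, $(6/5)^r = 1 + n/k$, hence $(5/6)^r = (1 + n/k)^{-1}$, and therefore
\begin{equation}
    |F_r| \le (5/6)^r n = \frac{n}{1 + n/k} = \frac{nk}{n + k}.
\end{equation}
It then remains to observe that $\tfrac{nk}{n+k} \le k$, which holds because $n \le n + k$ for all $n, k \ge 1$. Chaining the two bounds yields $|F_r| \le k$, as required.

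The arithmetic itself is routine; the point that deserves care (and the reason the statement is phrased with $1 + n/k$ rather than the naive $n/k$) is that the bound must stay valid over the entire range of $k$. When $k \ge n$ the forest already has at most $k$ vertices at round $0$, and $\log_{6/5}(1 + n/k)$ is correspondingly a small nonnegative quantity, so essentially no contraction is needed; the $+1$ inside the logarithm is exactly what keeps the round count nonnegative and keeps the final inequality $\tfrac{nk}{n+k} \le k$ valid in this regime, whereas $\log_{6/5}(n/k)$ would go negative. A second, minor wrinkle is that $\log_{6/5}(1 + n/k)$ need not be an integer, so strictly one runs $\lceil \log_{6/5}(1 + n/k) \rceil$ rounds; because contraction never increases the forest size, performing extra rounds past the point where $|F_r| \le k$ only preserves the bound, so rounding up is harmless. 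I do not expect any genuine obstacle here beyond correctly handling these boundary regimes.

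Finally, I would note that Corollary~\ref{cor:contract-to-noverlogn} is recovered as the special case $k = n/\log n$, for which $1 + n/k = 1 + \log n$ and $\log_{6/5}(1 + \log n)$ matches the claimed $\log_{6/5}\log n$ round count; Corollary~\ref{cor:completely-contract} follows from the same Lemma~\ref{lem:maximal_contraction} applied per component, each component of size $n_i \le n$ contracting to a singleton within $\log_{6/5} n_i \le \log_{6/5} n$ rounds.
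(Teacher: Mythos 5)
Your proposal is correct and follows exactly the paper's own route: iterate Lemma~\ref{lem:maximal_contraction} to get the geometric bound $|F_r| \le n\left(5/6\right)^r$, then substitute $r = \log_{6/5}(1+n/k)$ so that $(5/6)^r n = nk/(n+k) \le k$. The paper leaves this substitution implicit (``the three corollaries follow''), so your explicit arithmetic and the remarks on non-integer round counts and the $k \ge n$ regime simply fill in details the authors omitted.
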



\begin{proof}
By Lemma~\ref{lem:maximal_contraction}, the number of vertices in each round is at most $5/6^\text{th}$s of the previous round, so the number
remaining after round $r$ is at most $n \left(5/6\right)^r$.  The three corollaries follow.
\end{proof}

\subsection{Analysis of the static algorithm}

Armed with the lemmas and corollaries of Section~\ref{subsec:round-and-size-bounds}, we can now
analyze the static algorithm.

\begin{theorem}
    The basic maximal tree contraction algorithm can be implemented in $O(n)$ work and $O(\log n \log \log n)$
    span for a forest of $n$ vertices.
\end{theorem}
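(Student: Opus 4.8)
The plan is to analyze the cost of a single round of the contraction and then sum over all rounds, treating work and span separately. Each round performs three tasks on the current forest $F_i$: finding a maximal independent set of eligible (degree one and two) vertices, writing down the surviving vertices as $F_{i+1}$, and building the clusters for the vertices that contracted. First I would establish that each of these costs $O(|F_i|)$ work and $O(\log\log n)$ span. For the MIS, I would invoke Lemma~\ref{lem:mis-in-constant-degree-graph} with $c = 2$: the forest has constant degree $t$, so the lemma applies and yields $O(|F_i|)$ work and $O(\log\log n)$ span (the eligible vertices in fact induce chains, which only makes this cleaner). For writing down $F_{i+1}$, I would use the approximate compaction of Lemma~\ref{lem:approximate-compaction} to filter out the non-contracting vertices, again $O(|F_i|)$ work and $O(\log\log n)$ span. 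Building each cluster from the contents of a contracted vertex's constant-size adjacency list is $O(1)$ work and span per vertex, hence $O(|F_i|)$ work in total. So each round is $O(|F_i|)$ work and $O(\log\log n)$ span.

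Next I would bound the number of rounds and sum. By Corollary~\ref{cor:completely-contract}, the forest contracts completely in $\log_{6/5} n = O(\log n)$ rounds. Since the rounds run sequentially and each has $O(\log\log n)$ span, the total span is $O(\log n \log\log n)$. For the work, Lemma~\ref{lem:maximal_contraction} gives $|F_i| \le (5/6)^i n$, so the total work is a convergent geometric series,
\begin{equation}
\sum_{i=0}^{\log_{6/5} n} O(|F_i|) \le \sum_{i=0}^{\infty} O\!\left( \left(\tfrac{5}{6}\right)^i n \right) = O(n).
\end{equation}

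The main subtlety I anticipate concerns the interaction between approximate compaction and the work bound. Approximate compaction produces an array of size $O(|F_{i+1}|)$ containing blank slots in addition to the survivors, so the stored representation of each $F_i$ is padded by a constant factor. I would argue this is harmless: the padding constant is fixed and does not compound across rounds, because each round's compaction sizes its output relative to the true count $|F_{i+1}|$ of survivors rather than relative to its padded input. Hence the per-round work stays $O(|F_i|)$ with a uniform constant and the geometric argument goes through unchanged. A secondary point is the computational model: the compaction step requires the Common \textsc{CRCW PRAM} of Lemma~\ref{lem:approximate-compaction}, so the bound is stated there, whereas the neighbor-writing steps introduce no concurrent writes since constant degree lets each vertex locate its slot in its neighbors' adjacency lists exclusively.
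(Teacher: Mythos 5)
Your proposal is correct and follows essentially the same approach as the paper: per-round cost $O(|F_i|)$ work and $O(\log\log n)$ span (from Lemmas~\ref{lem:mis-in-constant-degree-graph} and~\ref{lem:approximate-compaction}), a geometric series via Lemma~\ref{lem:maximal_contraction} giving $O(n)$ total work, and $O(\log n)$ rounds from Corollary~\ref{cor:completely-contract} giving the span bound. Your extra observations (that the compaction padding does not compound across rounds, and that only the compaction step needs the Common CRCW model) are valid refinements the paper leaves implicit, not a different argument.
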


\begin{proof}
    The work performed at each round is $O(|F_i|)$, i.e., the number of live vertices in the forest at that round.
    By Lemma~\ref{lem:maximal_contraction}, the total work is therefore at most
    \begin{equation}
        \sum_{i=0}^\infty n \left(\frac{5}{6}\right)^i = n \sum_{i=0}^\infty \left(\frac{5}{6}\right)^i = 6n.
    \end{equation}
    The span of the algorithm is $O(\log \log n)$ per round to perform the maximal independent set and approximate
    compaction operations by Lemmas~\ref{lem:mis-in-constant-degree-graph}~and~\ref{lem:approximate-compaction}.
    By Corollary~\ref{cor:completely-contract} there are $O(\log n)$ rounds, hence the total span is $O(\log n \log \log n)$.
\end{proof}

\noindent Section~\ref{sec:improved-span} explains how to improve the span of the algorithm to $O(\log n \log^{(c)} n)$ for any constant
$c$.

\subsection{Analysis of the update algorithm} The analysis of the update algorithm
follows a similar pattern to the analysis of the randomized change propagation algorithm~\cite{acar2020batch}.
We sketch a summarised version below, then present the full analysis.

\myparagraph{Summary}
We begin by establishing the criteria for vertices becoming affected. Initially, the endpoints of
the updated edges and a small neighborhood around them are affected. We call these the \emph{origin vertices}.
For each of these vertices, it may \emph{spread} its affection to nearby vertices in the next round.
Those vertices may subsequently spread to other nearby vertices in the following round and so on.
As affection spreads, the affected vertices form an \emph{affected component}, a connected set of
affected vertices whose affection originated from a common origin vertex. An affected vertex that is
adjacent to an unaffected vertex is called a \emph{frontier vertex}.  Frontier vertices are those which
are capable of spreading affection.  Note that it is possible that in a given round, a vertex that
becomes affected was adjacent to multiple frontier vertices of different affected components,
and is subsequently counted by both of them, and might therefore be double counted in the analysis.
This is okay since it only overestimates the number of affected vertices in the end.

With these definitions established, our results show that each affected component consists of at
most two frontier vertices, and that at most four new vertices can be added to each affected component
in each round. Given these facts, since a constant fraction of the vertices
in any forest must contract in each round, we show that the size of each affected component shrinks by a constant
fraction, while only growing by a small additive factor. This leads to the conclusion
that each affected component never grows beyond a \emph{constant size}, and since there are initially
$O(k)$ origin vertices, that there are never more than $O(k)$ affected vertices in any round.
This fact allows us to establish that the update algorithm is efficient.

\myparagraph{The proofs} We now prove the afformentioned facts.



\begin{lemma}\label{lem:unaffected_same_contraction}
    If $v$ is unaffected at round $i$, then $v$ contracts in round $i$ in $F$ if and
    only if $v$ contracts in round $i$ in $F'$.
\end{lemma}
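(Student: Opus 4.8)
The plan is to first strip the statement down to its essential case using the three clauses in the definition of \emph{affected}, and then establish the two directions of the equivalence. Since $v$ is unaffected, the first clause fails, so $v$ is either alive in both $F_i$ and $F_i'$ or dead in both; in the latter case it contracts in neither forest and the claim holds vacuously, so I would assume $v$ is alive in both. The second clause also fails, so $v$ has the same adjacency list in $F_i$ and $F_i'$, and therefore the same degree and the same eligibility. If $v$ is not eligible (degree outside $\{1,2\}$) it cannot rake or compress in either forest, and we are done again. Thus it remains to treat an eligible vertex $v$ that is alive, with an identical neighborhood, in both $F_i$ and $F_i'$.

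For the forward direction (contracts in $F_i$ $\Rightarrow$ contracts in $F_i'$) I would appeal directly to the update algorithm. By Line~\ref{line:update_tree}, $F_i'$ is produced from $F_i$ by uncontracting only the affected vertices and then contracting a new maximal independent set that, by Line~\ref{line:new_mis}, is drawn exclusively from affected vertices. Hence the only vertices whose contraction status can change between $F_i$ and $F_i'$ are affected ones: an unaffected $v$ is neither uncontracted nor added to the new MIS, so its status is preserved verbatim. This observation in fact settles both directions at once, but the conceptually important content lies in re-deriving the reverse direction from the \emph{definition}, which is where the third clause earns its place.

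For the reverse direction (does not contract in $F_i$ $\Rightarrow$ does not contract in $F_i'$) I would argue as follows. Since $v$ is eligible and does not contract in $F_i$, maximality of the contraction of $F_i$ guarantees that $v$ has at least one neighbor that contracts in $F_i$. Because $v$ is unaffected, the negation of the third clause tells us that not \emph{all} of these contracting neighbors are affected, so there is an \emph{unaffected} neighbor $u$ of $v$ that contracts in $F_i$. Applying the forward direction to $u$ shows that $u$ also contracts in $F_i'$, and since $v$'s adjacency list is unchanged, $u$ is still a neighbor of $v$ in $F_i'$. As the contracting vertices at any round form an independent set, $v$ cannot contract in $F_i'$. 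Combining the two directions gives the stated equivalence.

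I expect the crux to be this reverse direction, specifically the interplay between the third (dependence) clause and the independent-set property. The two subtle points to verify are (i) that an eligible, non-contracting vertex of $F_i$ necessarily has a contracting neighbor, so that negating the third clause genuinely produces an \emph{unaffected} contracting neighbor rather than being vacuous, and (ii) that this neighbor $u$ remains adjacent to $v$ and still contracts in $F_i'$, which is exactly what the forward direction supplies. By contrast, the forward direction is immediate from the algorithm, and it is worth emphasizing that a purely structural argument from maximality alone would \emph{not} suffice: an affected neighbor could a priori contract in $v$'s place, a possibility that is excluded only by the algorithm's choice in Line~\ref{line:new_mis} to bar affected vertices that have an unaffected contracting neighbor.
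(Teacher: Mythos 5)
Your proposal is correct, and its essential core is exactly the paper's proof: the paper's entire argument is the one-line observation that unaffected vertices are ignored by the update algorithm---only affected vertices are uncontracted and only affected vertices are candidates for the new MIS---so their contraction status carries over verbatim, which is precisely your forward-direction paragraph and, as you yourself note, settles both directions at once. Your additional structural re-derivation of the reverse direction (eligibility, maximality in $F_i$, negation of the dependence clause to extract an unaffected contracting neighbor $u$, then independence in $F_i'$) is sound but redundant given that observation; it essentially duplicates the reasoning the paper uses later, in the proof of the correctness (maximality) lemma, which is where the interplay between the third clause and the independent-set property is genuinely needed. So nothing is wrong, but the lemma itself needs only the algorithmic observation, not the definitional analysis.
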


\begin{proof}
    Unaffected vertices are ignored by the update algorithm, and hence remain the same
    before and after an update.
\end{proof}

\noindent If a vertex is not affected during round $i$ but is affected during round $i+1$, we say that
$v$ \emph{becomes affected in round $i$}. A vertex can become affected in two ways.

\begin{lemma}\label{lem:become_affected}
If $v$ becomes affected in round $i$, then one of the following is true:
\begin{enumerate}[leftmargin=*]
    \item $v$ has an affected neighbor $u$ at round $i$ which contracted in either $F_i$ or $F_i'$
    \item $v$ does not contract by round $i+1$, and has an affected neighbor $u$ at round $i+1$ that contracts in $F_{i+1}$.
\end{enumerate}
\end{lemma}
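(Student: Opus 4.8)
The lemma asserts a dichotomy: if $v$ transitions from unaffected at round $i$ to affected at round $i+1$, then its affection arises either from a neighbor that contracted \emph{at round $i$} (case 1), or from a neighbor that contracts at round $i+1$ while $v$ itself survives (case 2). The plan is to work directly from the definition of \emph{affected} applied at level $i+1$, and to exploit the hypothesis that $v$ is unaffected at round $i$ together with Lemma~\ref{lem:unaffected_same_contraction}. The essential structural fact I want to use is that an unaffected vertex behaves identically in $F$ and $F'$ through round $i$, so any discrepancy that makes $v$ affected at round $i+1$ must be \emph{introduced} by a neighbor changing its behavior between rounds $i$ and $i+1$.

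\textbf{Case analysis following the definition.}
Since $v$ is affected at round $i+1$, it satisfies one of the three clauses of the \emph{Affected} definition at level $i+1$. First I would handle clauses (1) and (2), which both say $v$'s own state (aliveness or adjacency list) differs between $F_{i+1}$ and $F_{i+1}'$. Because $v$ is unaffected at round $i$, Lemma~\ref{lem:unaffected_same_contraction} gives that $v$ has the same contraction behavior at round $i$ in both forests, and by the unaffected hypothesis its adjacency list and aliveness agree at round $i$. Hence any difference appearing at round $i+1$ must be written into $v$'s round-$(i+1)$ state by some neighbor $u$ whose round-$i$ contraction behavior differs between $F$ and $F'$ — that is, by an affected neighbor $u$ that contracts (differently) in $F_i$ or $F_i'$. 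This is precisely case 1 of the lemma. The remaining clause (3) says $v$ is alive in both $F_{i+1},F_{i+1}'$, does not contract in $F_{i+1}$, and all its contracting neighbors in $F_{i+1}$ are affected; this directly matches case 2 (taking the witness neighbor $u$ that contracts in $F_{i+1}$), so I would read it off and confirm $v$ itself does not contract by round $i+1$.

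\textbf{Ruling out spurious sources of affection.}
The main subtlety, and what I expect to be the chief obstacle, is arguing that no affection can arise ``spontaneously'' at $v$ without one of these two neighbor-witnesses — i.e., tying clauses (1) and (2) of the definition firmly to a contracting affected neighbor at round $i$ rather than merely to some affected neighbor. I would make this rigorous by reasoning about the mechanics of the static algorithm: a vertex's round-$(i+1)$ adjacency entry in a given slot is written either by $v$ copying its own round-$i$ entry (unchanged, since $v$ is unaffected at round $i$) or by a neighbor $u$ in that slot writing a cluster pointer \emph{because $u$ contracted}. Therefore a changed entry or changed aliveness at round $i+1$ forces some neighbor $u$ to have contracted with differing behavior across $F$ and $F'$, which makes $u$ affected and contracting in $F_i$ or $F_i'$. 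I would also note the two cases need not be exclusive, and that case 2 additionally records $v$'s own non-contraction, which follows immediately from clause (3). Once the adjacency-writing mechanics pin down the witness neighbor, both cases drop out and the proof concludes.
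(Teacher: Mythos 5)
Your proof is correct and follows essentially the same route as the paper's: both rest on Lemma~\ref{lem:unaffected_same_contraction} to pin down $v$'s own round-$i$ behavior, on the observation that $v$'s round-$(i+1)$ adjacency list can only differ between $F$ and $F'$ through writes by affected neighbors that contract, and on reading conclusion (2) directly off clause (3) of the definition of affected. The only difference is organizational: the paper argues by contraposition ($\neg(1) \Rightarrow (2)$), showing that if no affected neighbor contracts at round $i$ then $v$'s state at round $i+1$ is identical in both forests so clause (3) must apply, whereas you run the same facts as a direct case analysis over the clauses of the definition.
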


\begin{proof}
    First, since $v$ becomes affected in round $i$, it is not already affected at round $i$. Therefore,
    due to Lemma~\ref{lem:unaffected_same_contraction}, $v$ does not contract, otherwise it would do
    so in both $F$ and $F'$ and hence be unaffected at round $i+1$. Since $v$ does not contract, $v$
    has at least one neighbor, otherwise it would finalize.

    Suppose that (1) is not true, i.e., that $v$ has no affected neighbors that contract in $F_i$
    or $F_i'$ in round $i$. Then either none of $v$'s neighbors contract in $F_i$, or only unaffected
    neighbors of $v$ contract in $F_i$. By Lemma~\ref{lem:unaffected_same_contraction}, in either case,
    $v$ has the same set of neighbors in $F_{i+1}$ and $F_{i+1}'$. Therefore, since $v$ is affected
    in round $i+1$, does not contract in either $F_i$ or $F_i'$, and has the same neighbors in both,
    it must be in Case~(3) in the definition of affected. Therefore, $v$ has an affected neighbor that
    contracts in $F_{i+1}$.

    Since $\neg (1) \Rightarrow (2)$, we have that $(1) \lor (2)$ is true.
\end{proof}

\begin{definition}[Spreading affection]\label{def:spread}
    An affected vertex $u$ \emph{spreads to $v$} if $v$ was unaffected at the beginning of round $i$ and became
    affected in round $i$ (i.e., is affected in round $i+1$) because
    \begin{enumerate}[leftmargin=*]
        \item $v$ is a neighbor of $u$ at round $i$, and $u$ contracts in round $i$ in either $F_i$ or $F_i'$, or
        \item $v$ does not contract in round $i+1$, and is a neighbor of $u$, which contracts in round $i+1$.
    \end{enumerate}
\end{definition}

We call Case~(1), \emph{spreading directly} and Case~(2) \emph{spreading by dependence}. Figure~\ref{fig:directly-affect} shows two examples
of directly spreading affection. Figure~\ref{fig:affect-by-dependence} shows an example of spreading affection by dependence.
Our end goal is to bound the number of affected vertices at each level, since this corresponds to the amount of work
required to update the contraction after an edge update.

\begin{figure}
    \centering
    \includegraphics[width=0.55\columnwidth]{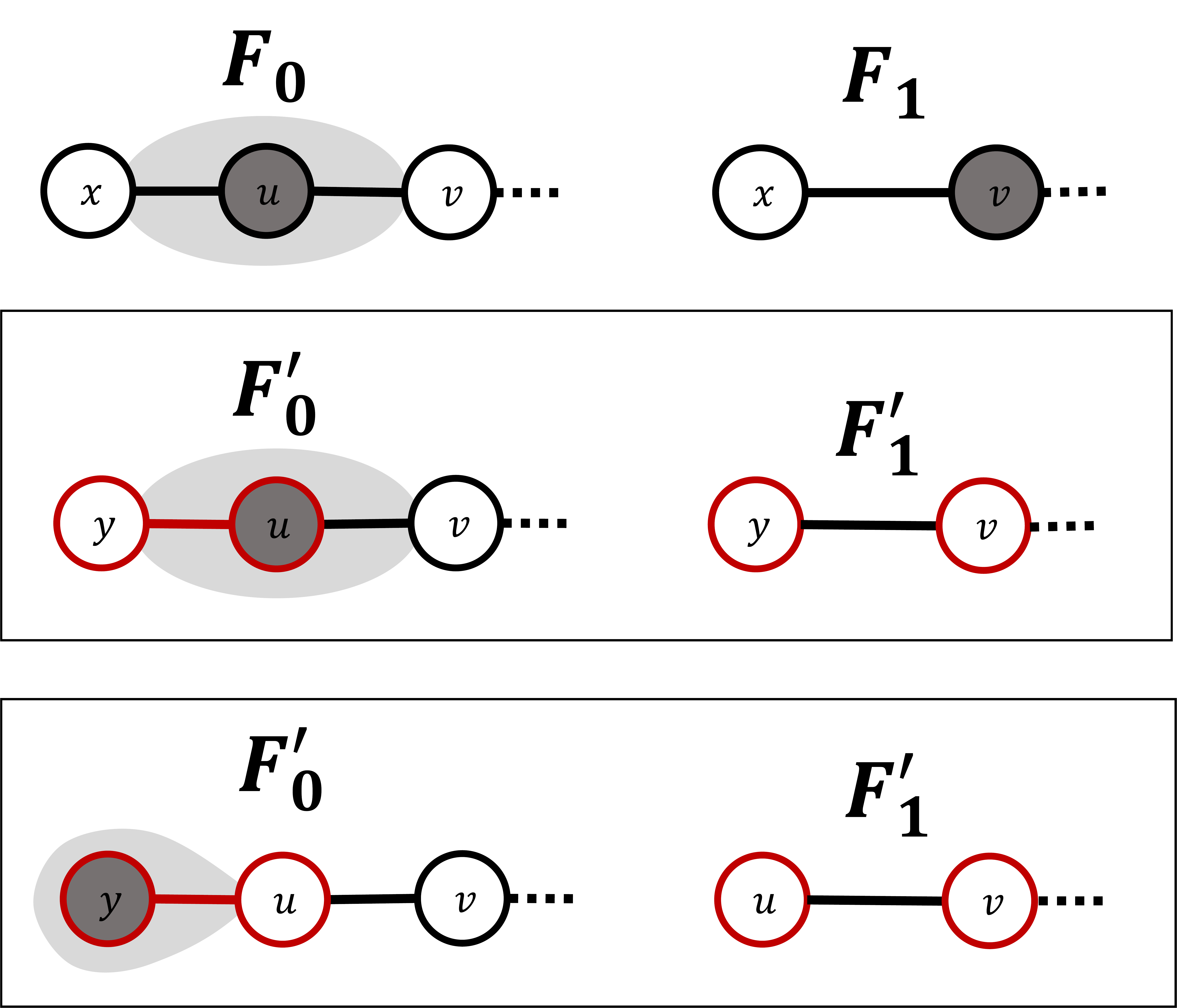}
    \caption{Direct affection (two possibilities): A vertex $u$ directly affects its neighbor $v$. $u$ is affected at round $0$ since its adjacency list
    was changed. Since $u$ contracts in $F_0$ and changes the adjacency list of $v$ at round $1$, $v$ becomes
    affected. Note that this can happen whether or not $u$ contracts in $F_0'$ as shown in the second possibility.}
    \label{fig:directly-affect}
\end{figure}

\begin{figure}
    \centering
    \includegraphics[width=0.65\columnwidth]{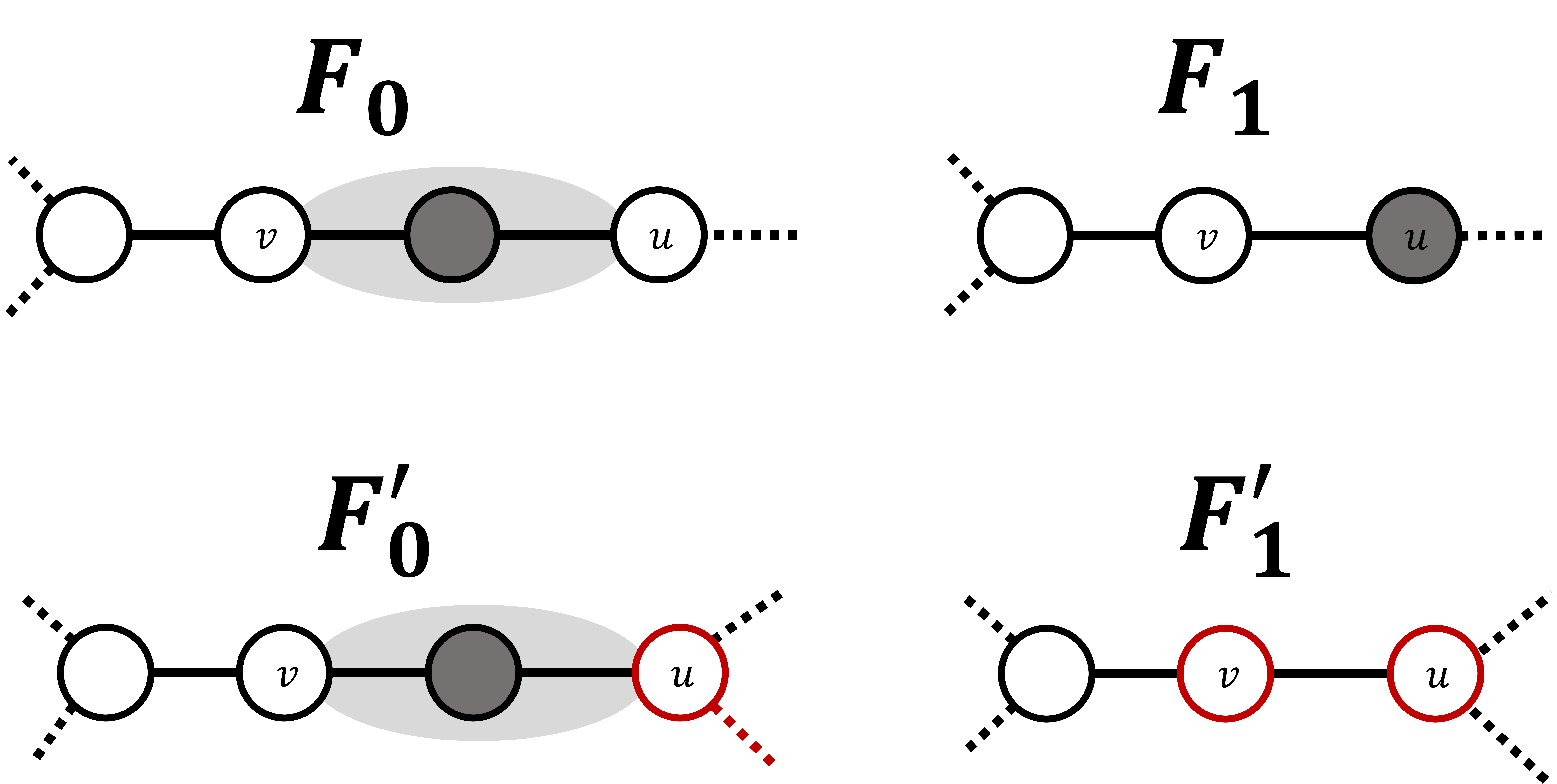}
    \caption{Affection by dependence: A vertex $u$ affects its neighbor $v$ by dependence. Even though $v$'s adjacency list
    is the same in both forests, it is affected because it depended on $u$ to contract in $F_1$ for maximality to be
    satisfied. Since $u$ can no longer contract in $F_1'$, it is important that $v$ is able to.}
    \label{fig:affect-by-dependence}
\end{figure}

Let $\mathcal{A}^i$ denote the set of affected vertices in round $i$.

\begin{lemma}\label{lem:initial_affected}
    For a batch update of size $k$ (insertion or deletion of $k$ edges), we have $|\mathcal{A}^0| \leq 6k$.
\end{lemma}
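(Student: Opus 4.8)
The plan is to bound $|\mathcal{A}^0|$ by separately counting the vertices affected under each of the three cases in the definition of affected, specialized to round $0$. The first observation is that at round $0$ every vertex of the forest is alive in both $F_0$ and $F_0'$, since an edge update only modifies adjacency lists and never creates or destroys vertices. Hence Case~(1) of the definition is vacuous at round $0$, and I only need to account for Cases~(2) and~(3).

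For Case~(2), I would observe that the only vertices whose adjacency lists differ between $F_0$ and $F_0'$ are the endpoints of the edges in $E^+ \cup E^-$. Since there are $k$ updated edges and each has two endpoints, at most $2k$ vertices are affected by Case~(2).

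The crux is bounding Case~(3). Here a vertex $v$ is affected because it is eligible, does not contract in $F_0$, and every neighbor of $v$ that contracts is itself affected. The key structural fact I would establish is that Case~(3) affection cannot cascade at round $0$: any vertex affected by Case~(3) does \emph{not} contract, so it can never play the role of the ``contracting affected neighbor'' that triggers another vertex's Case~(3) affection. Since Case~(1) is vacuous at round $0$, this forces every contracting affected neighbor to be a Case~(2) vertex, i.e., an endpoint of an updated edge. Because any vertex that contracts is eligible and hence has degree one or two, each such contracting endpoint has at most two neighbors and can therefore cause at most two vertices to become affected by Case~(3). With at most $2k$ endpoints, this yields at most $4k$ Case~(3) affected vertices.

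Summing the contributions gives $|\mathcal{A}^0| \le 2k + 4k = 6k$, as claimed. The main obstacle, and the step requiring the most care, is the non-cascading argument for Case~(3): one must verify that the ``affected'' condition appearing \emph{inside} Case~(3) can, at round $0$, only be satisfied through Case~(2), so that the count closes without an inductive blowup. Once this is established, the degree bound on contracting vertices supplies the constant multiplier and the arithmetic is immediate.
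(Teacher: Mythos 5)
Your proof is correct and follows essentially the same counting as the paper: the $2k$ endpoints with modified adjacency lists, plus at most two dependence-affected (Case~3) neighbors per contracting endpoint, giving $2k + 4k = 6k$. The only difference is that you make explicit the non-cascading argument for Case~3 (that Case-3 vertices never contract and so cannot trigger further Case-3 affection at round $0$), which the paper's proof leaves implicit.
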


\begin{proof}
    A single edge changes the adjacency list of its two endpoints. These two endpoints might contract in
    the first round, which affects their uncontracted neighbors by dependence. However, vertices that contract
    have degree at most two, so this is at most two additional vertices per endpoint. Therefore there are up to
    $6$ affected vertices per edge modification, and hence up to $6k$ affected vertices in total.
\end{proof}


\noindent Each edge modified at round $0$ affects some set of vertices,
which spread to some set of vertices at round $1$, which spread to some set of vertices at round $2$ and so on.
We will therefore partition the set of affected
vertices into $s = |\mathcal{A}^0|$ \emph{affected components}, indicting the ``origin'' of the affection.  When a vertex
$u$ spreads to $v$, it will add $v$ to its component for the next round.

More formally, we will construct $\mathcal{A}_1^i, \mathcal{A}_2^i, \ldots, \mathcal{A}_s^i$, which form
a partition of $\mathcal{A}^i$. We start by arbitrarily partitioning $\mathcal{A}^0$ into $s$ singleton
sets $\mathcal{A}_1^0, \mathcal{A}_2^0, \ldots, \mathcal{A}_s^0$. Given $\mathcal{A}_1^i, \mathcal{A}_2^i, \ldots, \mathcal{A}_s^i$,
we construct $\mathcal{A}_1^{i+1}, \mathcal{A}_2^{i+1}, \ldots, \mathcal{A}_s^{i+1}$ such that $\mathcal{A}^{i+1}_j$ contains
the affected vertices $v \in A^{i+1}$ that were either already affected in $\mathcal{A}^i_j$ or were spread
to by a vertex $u \in \mathcal{A}^i_j$. Note that it is possible, under the given definition, for multiple vertices
to spread to another, so this may overcount by duplicating vertices.

\begin{definition}[Frontier]
    A vertex $v$ is a \emph{frontier} at round $i$ if $v$ is affected at round $i$ and one of its neighbors
    in $F_i$ is unaffected at round $i$.
\end{definition}

\begin{lemma}\label{lem:frontier-looks-same-in-both}
    If $v$ is a frontier vertex at round $i$, then it is alive in both $F_i$ and $F_i'$ at round $i$, and
    is adjacent to the same set of unaffected vertices in both.
\end{lemma}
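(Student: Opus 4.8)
The plan is to leverage the fact, immediate from the definition of \emph{affected}, that an unaffected vertex looks identical in $F_i$ and $F_i'$, and then to push this identity from the unaffected neighbor back onto $v$ using the symmetry of the adjacency-list representation. First I would record the easy consequences of the hypotheses: since $v$ is a frontier at round $i$, by definition it is affected and has at least one neighbor in $F_i$, so in particular $v$ is alive in $F_i$. Fix an unaffected neighbor $w$ of $v$ in $F_i$. Because $w$ is unaffected, neither Case~(1) nor Case~(2) of the definition of affected applies to it, so $w$ is alive in both $F_i$ and $F_i'$ and has \emph{exactly} the same adjacency list in both (and, by Lemma~\ref{lem:unaffected_same_contraction}, the same contraction behaviour).

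Next I would establish that $v$ is alive in $F_i'$. The edge between $v$ and $w$ in the contracted forest $F_i$ is recorded in $w$'s adjacency list as a single entry: either a pointer to an original edge of $F$ incident to $v$ (Case~2 of the adjacency-list encoding) or a pointer to a binary cluster whose two boundary vertices are $w$ and $v$ (Case~3). In either case this entry witnesses $v$ as the other endpoint. Since $w$'s adjacency list is identical in $F_i'$, the very same entry occurs there and witnesses the same neighbor $v$; as an endpoint (or boundary vertex) of a live edge of $F_i'$, the vertex $v$ must itself be alive in $F_i'$. This yields the first claim, that $v$ is alive in both forests.

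I would then prove that the sets of unaffected neighbors coincide, which follows by applying the preceding argument symmetrically. If $w$ is any unaffected neighbor of $v$ in $F_i$, the identical adjacency-list entry of $w$ shows $w$ is a neighbor of $v$ in $F_i'$ as well, so $w$ is an unaffected neighbor of $v$ in $F_i'$; conversely, any unaffected neighbor $w'$ of $v$ in $F_i'$ has the same adjacency list in $F_i$ and is therefore an unaffected neighbor of $v$ in $F_i$. Hence the two neighbor sets are equal.

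The main obstacle I anticipate is the bookkeeping in the middle step: one must argue carefully that ``$v$ is a neighbor of $w$'' is recorded \emph{symmetrically and by a single identical pointer} in the contraction data structure, and that this pointer can only be an original-edge or binary-cluster entry (never a unary cluster, which encodes no edge of the contracted tree). Getting this correspondence exactly right, so that sameness of $w$'s adjacency list transfers cleanly to sameness of $v$'s liveness and adjacency, is the crux; everything else is a routine symmetric application of the definition of affected.
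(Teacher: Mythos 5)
Your proof is correct and takes essentially the same approach as the paper: both arguments rest on the observation that an unaffected neighbor must have an identical adjacency list in $F_i$ and $F_i'$, which by the symmetry of adjacency witnesses $v$'s liveness and its adjacency in both forests. The only difference is presentational---the paper argues contrapositively (if $v$ were alive in one forest but dead in the other, every neighbor would see a changed adjacency list and hence be affected, so $v$ could have no unaffected neighbor and would not be a frontier), whereas you argue forward from a fixed unaffected neighbor $w$ whose unchanged adjacency list pins down $v$.
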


\begin{proof}
    If $v$ were dead in both forests it would not be affected and hence not a frontier vertex. If $v$ were
    alive in one forest but dead in the other, then all of its neighbors
    would have a different set of neighbors in $F_i$ and $F_i'$ (they must be missing $v$) and hence all
    of them would be affected, so $v$ would have no unaffected neighbors and hence not be a frontier.

    Similarly, consider an unaffected neighbor $u$ of $v$ in either forest. If $u$ was not adjacent to $v$
    in the other forest, it would have a different set of neighbors and hence be affected.
\end{proof}

\noindent If a $v$ spreads to a vertex in round $i$, then clearly $v$ must be a frontier. Our next goal
is to analyze the structure of the affected sets and then show that the number of frontier vertices is small.

\begin{lemma}\label{lem:affected-tree}
    For all $i, j$, the subforest induced by $\mathcal{A}^i_j$ in $F_i$ is a tree
\end{lemma}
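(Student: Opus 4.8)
The plan is to induct on the round number $i$, proving for every component index $j$ that $F_i[\mathcal{A}^i_j]$ is connected. Acyclicity is immediate, since $F_i$ is itself a forest and any induced subgraph of a forest is acyclic, so the entire content of the lemma is connectivity. The base case $i = 0$ is trivial because each $\mathcal{A}^0_j$ is a singleton. For the inductive step I assume $T_i = F_i[\mathcal{A}^i_j]$ is a tree and analyze how a single round of (original) contraction, together with the spreading of affection, transforms it into $F_{i+1}[\mathcal{A}^{i+1}_j]$.

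The core observation is that the vertices of $T_i$ that contract in round $i$ form an independent set (contracting vertices always do), so no two adjacent vertices of $T_i$ vanish simultaneously. I would then classify each contracting vertex $w \in T_i$: if $w$ rakes it has degree one in $F_i$ and is therefore a leaf of $T_i$, so deleting it cannot disconnect the remainder; if $w$ compresses it has degree two in $F_i$, and, crucially, if $w$ is internal to $T_i$ then both of its $F_i$-neighbors lie in $\mathcal{A}^i_j$, so the compress creates a new edge of $F_{i+1}$ that directly reconnects the two sides across $w$. Because a contracting $w$ spreads to each of its neighbors (Definition~\ref{def:spread}, case~1), these reconnecting endpoints are affected and hence remain in $\mathcal{A}^{i+1}_j$; moreover any surviving adjacent pair of $T_i$ keeps its edge in $F_{i+1}$ since neither endpoint contracted. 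Thus the survivors of $T_i$ stay connected in $F_{i+1}$. Finally, each newly spread-to vertex attaches to the component along an $F_{i+1}$-edge: a directly-spread vertex is adjacent to a contracting affected vertex (through a surviving edge or through the new compress edge), and a dependence-spread vertex is by definition adjacent in $F_{i+1}$ to an affected vertex that contracts there. Attaching new vertices to an already-connected set keeps it connected, closing the induction.

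The step I expect to be the main obstacle is reconciling the two forests. Affection and spreading are defined by comparing $F_i$ with $F_i'$, whereas the lemma asserts a structural fact purely about $F_i$; I must be careful that a vertex counted in $\mathcal{A}^i_j$ is genuinely alive in $F_i$ (and not only in $F_i'$) before invoking its $F_i$-adjacencies, and that the edges created by compresses in the $F_i \to F_{i+1}$ direction are the ones I use. Here I would lean on Lemma~\ref{lem:frontier-looks-same-in-both}, which guarantees that frontier vertices are alive in both forests and see the same unaffected neighbors, so the interface between a component and the outside behaves identically in $F_i$ and $F_i'$; this is what lets the $F_i$-side connectivity argument go through despite the definitions referencing $F_i'$. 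A secondary subtlety to dispatch is whether a surviving vertex of $T_i$ can cease to be affected at round $i+1$ and thereby be dropped as a would-be cut vertex; I would argue that such a drop still cannot separate the component, precisely because the reconnecting compress edges and the spreading relation keep the induced subgraph of the survivors connected. The possible double-counting of a vertex spread to from two distinct components is harmless here, as the tree claim is made per component and each $\mathcal{A}^{i+1}_j$ is analyzed independently.
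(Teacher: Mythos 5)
Your proof takes essentially the same route as the paper's, whose entire argument consists of the two observations you formalize: rake and compress operations preserve connectedness of the underlying tree, and Lemma~\ref{lem:become_affected} shows that affection spreads only to neighboring vertices. Your induction is a considerably more detailed elaboration of that two-line argument, and the subtleties you flag---liveness in $F_i$ versus $F_i'$, and affected vertices being dropped from $\mathcal{A}^{i+1}_j$---are real but are not addressed by the paper's own proof either.
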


\begin{proof}
    The rake and compress operations both preserve the connectedness of the underlying tree, and Lemma~\ref{lem:become_affected}
    shows that affection only spreads to neighboring vertices.
\end{proof}

\begin{lemma}\label{lem:frontier_size}
    $\mathcal{A}^i_j$ has at most two frontiers and $|\mathcal{A}^{i+1}_j \setminus \mathcal{A}^i_j| \leq 4$.
\end{lemma}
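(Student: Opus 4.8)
The plan is to prove both assertions simultaneously by induction on the round $i$, carrying as invariant that $\mathcal{A}^i_j$ induces a subtree of $F_i$ (which we already know from Lemma~\ref{lem:affected-tree}) whose frontier vertices are at most two in number and sit at the ``ends'' of this subtree. For the base case $i = 0$ each $\mathcal{A}^0_j$ is a singleton by construction, so it has at most one frontier and the invariant holds trivially. The inductive hypothesis supplies the bound ``at most two frontiers'' at round $i$, from which we must derive the growth bound for the step $i \to i+1$ and re-establish ``at most two frontiers'' at round $i+1$.

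For the growth bound I would first observe that only frontier vertices can spread: if $u$ spreads to $v$, then $v$ is unaffected and adjacent to $u$, so $u$ has an unaffected neighbor and is a frontier by definition. By the inductive hypothesis there are at most two such vertices in $\mathcal{A}^i_j$. I would then bound the spread of a single frontier: in both cases of Definition~\ref{def:spread} the spreading vertex contracts (in round $i$ for direct spreading, in round $i+1$ for spreading by dependence), so it has degree at most two in the relevant forest and hence at most two neighbors to spread to. Combining these, at most $2 \times 2 = 4$ new vertices are added, giving $|\mathcal{A}^{i+1}_j \setminus \mathcal{A}^i_j| \leq 4$.

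The main obstacle is re-establishing the ``at most two frontiers'' invariant at round $i+1$, and the key auxiliary claim I would prove along the way is that an interior (non-frontier) vertex of $\mathcal{A}^i_j$ cannot become a frontier. Such a vertex $x$ has all of its round-$i$ neighbors affected, and any new neighbor $z$ acquired at round $i+1$ arises from a compress of an affected neighbor $y$ of $x$; were $z$ unaffected, then the contracting affected vertex $y$ would spread to $z$ by Definition~\ref{def:spread}, making $z$ affected, a contradiction. Hence $x$ retains only affected neighbors (or contracts away), so no new frontiers are created in the interior. This reduces the problem to showing that the frontiers of $\mathcal{A}^{i+1}_j$ are either inherited old frontiers or newly spread-to vertices, and that the spreading structure only extends or joins the subtree at its ends rather than branching it.

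To finish I would argue that this end-extension process keeps the number of ends at two. Each old frontier either survives without spreading, remaining a single end, or contracts in order to spread; a contracting end of a nontrivial component has exactly one affected neighbor inside the component and therefore at most one unaffected neighbor, so it is replaced by at most one new end, while a singleton component has a single frontier that may spread to two neighbors, producing a two-vertex component with two ends. I would invoke Lemma~\ref{lem:frontier-looks-same-in-both} so that frontiers behave identically in $F_i$ and $F_i'$, which lets the argument avoid branching separately on the two forests, and then a case analysis over rake versus compress and direct versus dependence spreading shows that the total number of ends, and hence frontiers, never exceeds two. The delicate bookkeeping in this final step, namely verifying that no combination of spreads ever branches the affected subtree, is where I expect the real work to lie.
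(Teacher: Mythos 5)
Your proposal is correct and follows essentially the same route as the paper's proof: induction on the round, the observation that only frontiers can spread, the degree-at-most-two bound on contracting vertices combined with Lemma~\ref{lem:frontier-looks-same-in-both} to cap each frontier's spread at two (hence $2 \times 2 = 4$ new vertices), and a case analysis showing each contracting frontier is replaced by at most as many new frontiers, using the tree structure from Lemma~\ref{lem:affected-tree}. Your explicit auxiliary claim that interior vertices cannot become frontiers is left implicit in the paper's proof, but it is a sound strengthening of the same argument rather than a different approach.
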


\begin{proof}
We proceed by induction on $i$. At round $0$, each group contains one vertex, so it definitely
contains at most $2$ frontier vertices.
Consider some $\mathcal{A}^i_j$ and suppose it contains one frontier vertex $u$, which may spread
directly by contracting (Definition~\ref{def:spread}). If $u$ spreads directly, then it either compresses
or rakes in $F_i$ or $F_i'$. This means it has degree at most two in $F_i$ or $F_i'$, and by
Lemma~\ref{lem:frontier-looks-same-in-both}, it is therefore adjacent to at most two unaffected vertices,
and hence may spread to at most these two vertices. Since $u$ contracts, it is no longer a frontier by
Lemma~\ref{lem:frontier-looks-same-in-both}, but its newly affected neighbors may become frontiers, so
the number of frontiers is at most two.

Suppose $u$ spreads via dependency in round $i$ (Case 2 in Definition~\ref{def:spread})
in $\mathcal{A}^{i+1}_j$ and contracts in $F_{i+1}$. Since $u$ contracts in $F_{i+1}$, it has at most two neighbors, and by
Lemma~\ref{lem:frontier-looks-same-in-both}, it is also adjacent to at most two unaffected vertices,
and may spread to at most these two vertices. If it spreads to one of them, it may become a frontier
and hence there are at most two frontier vertices. If it spreads to both of them, $u$ is no longer
adjacent to any unaffected vertices and hence is no longer a frontier, so there are still at most
two frontier vertices, and $|\mathcal{A}^{i+1}_j \setminus \mathcal{A}^i_j| \leq 3$.

Now consider some $\mathcal{A}^i_j$ that contains two frontier vertices $u_1, u_2$. By Lemma~\ref{lem:affected-tree},
$u_1$ and $u_2$ each have at least one affected neighbor. If either contract, it would no longer be a frontier,
and would have at most one unaffected neighbor which might become affected and a frontier. Therefore the number
of frontiers is preserved when affection is spread directly.

Lastly, suppose $u_1$ or $u_2$ spreads via dependency in round $i$. Since it would contract in $F_{i+1}$, it has at most
one unaffected neighbor which might become affected and become a frontier. It would subsequently have no unaffected
neighbor and therefore no longer be a frontier. Therefore the number of frontiers remains at most two and
$|\mathcal{A}^{i+1}_j \setminus \mathcal{A}^i_j| \leq 4$.
\end{proof}

\noindent Now define $\mathcal{A}^i_{F,j} = \mathcal{A}^i_j \cap V^i_F$, the set of affected vertices from $\mathcal{A}^i_j$ that
are live in $F$ at round $i$, and similarly define $\mathcal{A}^i_{F',j}$ for $F'$.

\begin{lemma}\label{lem:affected-component-size}
    For every $i,j$ we have
    \begin{equation}
        |\mathcal{A}^i_{F,j}| \leq 26.
    \end{equation}
\end{lemma}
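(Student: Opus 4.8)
The plan is to bound the size of each affected component's live-in-$F$ part $|\mathcal{A}^i_{F,j}|$ by a constant using a recurrence that combines two opposing forces: the additive growth from spreading affection (at most $4$ new vertices per round, by Lemma~\ref{lem:frontier_size}) against the multiplicative shrinkage coming from tree contraction (Lemma~\ref{lem:maximal_contraction}, giving a $5/6$ factor). The key observation is that the set $\mathcal{A}^i_{F,j}$ sits inside a tree (Lemma~\ref{lem:affected-tree}) that is being contracted, so a constant fraction of its vertices should disappear each round, just as in the global analysis.

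\emph{First} I would set up the recurrence. Let me write $a_i = |\mathcal{A}^i_{F,j}|$. The new affected vertices added in going from round $i$ to round $i+1$ number at most $4$ (Lemma~\ref{lem:frontier_size}), and of the old ones, I want to argue that a $1/6$ fraction contract and leave the live forest, so that roughly $a_{i+1} \leq \tfrac{5}{6} a_i + 4$. The subtlety is that Lemma~\ref{lem:maximal_contraction}'s $1/6$ bound applies to an entire tree contracting maximally, whereas $\mathcal{A}^i_{F,j}$ is only a connected \emph{sub}tree of $F_i$, and its boundary (frontier) vertices interact with unaffected neighbors. So I would use Lemma~\ref{lem:frontier-looks-same-in-both} and Lemma~\ref{lem:frontier_size} to control the boundary: there are at most two frontier vertices, so at most a constant number of the vertices in $\mathcal{A}^i_{F,j}$ fail to behave as they would in an isolated tree. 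The interior vertices of the affected subtree see the same local neighborhood as in a standalone contraction, so the maximal-contraction guarantee applies to them up to a constant additive correction for the $\leq 2$ frontiers.

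\emph{Then} I would solve the recurrence. A relation of the form $a_{i+1} \leq \tfrac{5}{6} a_i + C$ for a constant $C$ has fixed point $6C$, and since $a_0 = 1$ (each component starts as a singleton), the sequence is bounded by $\max(a_0, 6C)$ for all $i$; plugging in the constants should yield the stated bound of $26$. The precise constant $26$ presumably falls out of taking $C$ to account for both the $4$ freshly spread vertices and the handful of frontier/boundary vertices that may not contract on schedule, so I would track these constants carefully rather than absorbing them into $O(\cdot)$ notation, since the statement asks for an explicit number.

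\emph{The main obstacle} I expect is rigorously justifying the $\tfrac{5}{6}$ shrinkage on the subtree $\mathcal{A}^i_{F,j}$ rather than on a full standalone tree. Lemma~\ref{lem:maximal_contraction} counts degree-one-and-two vertices and a maximal independent set among them, but within an affected subtree a vertex's degree in $F_i$ may differ from its degree in the subtree (it may have unaffected neighbors), and whether it contracts depends on the global MIS, not a local one. I would need to argue that the frontier vertices—being only two—account for all such discrepancies, so that after removing their constant contribution, the remaining affected vertices form (or nearly form) a tree whose maximal contraction removes at least a sixth of them. Handling the case analysis of whether frontiers contract, spread directly, or spread by dependence, and ensuring no affected vertex is undercounted when it is simultaneously live in $F$ and newly spread, is where the bookkeeping will be most delicate.
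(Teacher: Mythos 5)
Your proposal follows essentially the same route as the paper's proof: both induce the tree structure on $\mathcal{A}^i_{F,j}$ via Lemma~\ref{lem:affected-tree}, charge the at most two frontier vertices and at most four newly spread vertices (Lemma~\ref{lem:frontier_size}) as an additive correction, apply Lemma~\ref{lem:maximal_contraction} to the remaining vertices to get the recurrence $|\mathcal{A}^{i+1}_{F,j}| \leq \tfrac{5}{6}\left(|\mathcal{A}^i_{F,j}| - 2\right) + 4 + 2 = \tfrac{5}{6}|\mathcal{A}^i_{F,j}| + \tfrac{26}{6}$, and solve it by the geometric-series/fixed-point bound to obtain $26$. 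The subtlety you flag (that the $5/6$ shrinkage must be justified on a subtree whose boundary interacts with unaffected vertices) is exactly what the paper resolves by noting that the algorithm's MIS criterion excludes only vertices adjacent to unaffected contracting neighbors, of which there are at most the two frontiers.
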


\begin{proof}
    Consider the subforest induced by the set of affected vertices $\mathcal{A}^i_{F,j}$. By Lemmas~\ref{lem:affected-tree}
    and~\ref{lem:frontier_size}, this is a tree with two frontier vertices. The update algorithm finds and contracts a
    maximal independent set of affected degree one and two vertices that are not adjacent to an unaffected vertex that
    contracts in $F_i$. There can be at most two vertices (the frontiers) that are adjacent to an unaffected vertex,
    and at most four new affected vertices appear by Lemma~\ref{lem:affected-tree},
    so by Lemma~\ref{lem:maximal_contraction}, the size of the new affected set is
    \begin{equation}
    \begin{split}
        |\mathcal{A}^{i+1}_{F,j}| &\leq 4 + \frac{5}{6}\left( |\mathcal{A}^i_{F,j}| - 2 \right) + 2 \\
        &= \frac{26}{6} + \frac{5}{6} |\mathcal{A}^i_{F,j}|
    \end{split}
    \end{equation}
    Since $|\mathcal{A}^0_{F,j}| = 1$, we obtain
    \begin{equation}
        |\mathcal{A}^{i+1}_{F,j}| \leq \frac{26}{6} \sum_{r=0}^\infty \left(\frac{5}{6}\right)^r = \frac{\frac{26}{6}}{1 - \frac{5}{6}} = \frac{\frac{26}{6}}{\frac{1}{6}} = 26
    \end{equation}
\end{proof}

\begin{lemma}\label{lem:affected_size}
    Given a batch update of $k$ edges, for every $i$
    \begin{equation}
        |\mathcal{A}^i| \leq 312 k
    \end{equation}
\end{lemma}

\begin{proof}
    By Lemma~\ref{lem:initial_affected}, there are at most $6k$ affected groups. At any level, every affected
    vertex must be live in either $F$ or $F'$, so $\mathcal{A}^i_j = \mathcal{A}^i_{F,j} \cup \mathcal{A}^i_{F',j}$, and hence
    \begin{equation}
        |\mathcal{A}^i| \leq \sum_{j=1}^{6k} \left( |\mathcal{A}^i_{F,j}| + |\mathcal{A}^i_{F',j}| \right) \leq 6k \times 26 \times 2 = 312k
    \end{equation}
\end{proof}

\noindent We conclude that given an update of $k$ edges, the number of affected vertices at each level is $O(k)$.

\myparagraph{Putting it all together}
Given the series of lemmas above, we now have the power to analyze the performance of the update algorithm.

\begin{theorem}[Update performance]
    A batch update consisting of $k$ edge insertions or deletions takes $O\left(k \log\left(1 + n/k\right)\right)$ work
    and $O(\log n \log \log k)$ span.
\end{theorem}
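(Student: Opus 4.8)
The plan is to bound work and span separately, viewing Algorithm~\ref{alg:update} as $O(\log n)$ sequentially processed levels, each running three parallel subroutines whose cost is governed by the number of affected vertices $|\mathcal{A}^i|$ at that level.

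For the span, I would show each level costs $O(\log\log k)$. The candidate set processed at level $i$ consists of the previous round's affected vertices together with their distance-two neighborhoods, which has size $O(k)$ since $|\mathcal{A}^i| \le 312k$ by Lemma~\ref{lem:affected_size} and the forest has bounded degree. Line~\ref{line:new_affected} filters this set, so approximate compaction over $O(k)$ elements (Lemma~\ref{lem:approximate-compaction}) runs in $O(\log\log k)$ span. Line~\ref{line:new_mis} finds an MIS on $O(k)$ constant-degree vertices, so taking $c=2$ in Lemma~\ref{lem:mis-in-constant-degree-graph} also gives $O(\log\log k)$ span. Line~\ref{line:update_tree} updates each affected vertex's constant-size adjacency list independently in $O(1)$ span. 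Since Corollary~\ref{cor:completely-contract} bounds the number of levels by $\log_{6/5} n = O(\log n)$, the total span is $O(\log n \log\log k)$.

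For the work, each level costs $O(|\mathcal{A}^i|)$, so the total is $\sum_i O(|\mathcal{A}^i|)$. The key point, and the main obstacle, is that applying only the bound $|\mathcal{A}^i| \le 312k$ of Lemma~\ref{lem:affected_size} across all $O(\log n)$ levels yields $O(k \log n)$, which is not work-efficient. To obtain $O(k \log(1+n/k))$ I would pair this with a complementary bound: every affected vertex is alive in $F_i$ or $F_i'$, and each such forest has at most $n(5/6)^i$ vertices by Lemma~\ref{lem:maximal_contraction}, whence $|\mathcal{A}^i| \le \min(312k,\, 2n(5/6)^i)$.

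The remaining work bound is then a geometric-sum argument split at the crossover round $i^\ast = \log_{6/5}(n/\Theta(k)) = O(\log(1+n/k))$ where the two bounds meet. For $i \le i^\ast$ I charge $O(k)$ per level, contributing $O(k \log(1+n/k))$; for $i > i^\ast$ the bound $2n(5/6)^i$ is a decreasing geometric series summing to $O(n(5/6)^{i^\ast}) = O(k)$. Summing gives $O(k \log(1+n/k))$ total work, using $\log(n/k) + 1 = \Theta(\log(1+n/k))$ for $k \le n$, and noting that when $k = \Theta(n)$ the crossover round is non-positive so the forest-size bound applies throughout and gives $O(n) = O(k)$. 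The delicate part is precisely this two-regime accounting: one must combine the constant per-component size from Lemma~\ref{lem:affected_size} with the geometrically shrinking forest size so that the affected set is charged at the full $\Theta(k)$ rate for only $O(\log(1+n/k))$ rounds rather than all $O(\log n)$ of them.
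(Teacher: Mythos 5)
Your proposal is correct and follows essentially the same two-regime argument as the paper: charge $O(k)$ per level (via Lemma~\ref{lem:affected_size}) for the first $O(\log(1+n/k))$ rounds, then use the geometric decay of the live forests (the paper phrases this via Corollary~\ref{cor:contract-to-k}, you via $|\mathcal{A}^i| \leq 2n(5/6)^i$, which is the same bound) to sum the tail to $O(k)$, with the identical per-round $O(\log\log k)$ span accounting for compaction and the MIS over $O(\log n)$ rounds.
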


\begin{proof}
    The update algorithm performs work proportional to the number of affected vertices at each level.  Consider separately the work performed processing the levels up to and including round $r = \log_{6/5}\left(1+n/k\right)$. By Lemma~\ref{lem:affected_size}, there
    are $O(k)$ affected vertices per level, so the work performed up to including level $r$ is
    \begin{equation}
        O(kr) = O\left(k \log\left(1 + \tfrac{n}{k}\right)\right).
    \end{equation}
    By Corollary~\ref{cor:contract-to-k}, after $r$ rounds of contraction, there are at most $k$ live vertices remaining in $F_r$ or $F_r'$.
    The number of affected vertices is at most the number of live vertices in either forest, and hence at most $2k$. The amount of affected vertices
    in all subsequent rounds is therefore at most
    \begin{equation}
    \begin{split}
        \sum_{i = 0}^\infty \left(\frac{5}{6}\right)^i 2k = \frac{2k}{1-\frac{5}{6}} = 12k,
    \end{split}
    \end{equation}
    and hence the remaining work is $O(k)$. Therefore the total work across all rounds is at most
    \begin{equation}
        O\left(k \log\left(1 + \tfrac{n}{k}\right)\right) + O(k) = O\left(k \log\left(1 + \tfrac{n}{k}\right)\right).
    \end{equation}
    In each round, it takes $O(\log^{(c)} k)$ span to find a maximal independent set of the affected vertices for
    any constant $c$, so we can choose $c = 2$ to match the span of approximate compaction required to filter out
    the vertices that are no longer affected in the next round. Each round takes $O(\log \log k)$ span, so over $O(\log n)$ rounds,
    this results in $O(\log n \log k \log k)$ span.
    
\end{proof}

\section{Optimizations}\label{sec:improved-span}

Our static tree contraction algorithm and our basic result on dynamically updating
it are work-efficient ($O(n)$ and $O(k \log(1+n/k))$ work respectively) and run in
$O(\log n \log \log n)$ and $O(\log n \log \log k)$ span respectively. In both cases,
there are two bottlenecks to the span: computing a maximal independent set, and
performing approximate compaction to remove vertices that have contracted or are
no longer affected. Improving the span of the maximal indpendent set is easy since
it runs in $O(\log^{(c)} n)$ span for any $c$, and we can just choose a smaller $c$
(the basic algorithm chose $c=2$ to match the span of approximate compaction).

Therefore, the only remaining bottleneck is the approximate compaction, which we
can improve as follows. We first describe a faster static algorithm, which introduces
the techniques we will use to improve the update algorithm. We also describe an improvement
that eliminates the need for concurrent writes since approximate compaction requires
the Common CRCW model.

Lastly, we will also show that our span optimization technique can be used to speed
up the randomized variant of the algorithm. 

\subsection{A lower span static algorithm}

The basic static algorithm uses approximate compaction after each round to filter out
the vertices that have contracted. This is important, since without this step,
every round would take $\Theta(n)$ work, for a total of $\Theta(n \log n)$ work,
which is not work efficient. This leads to an $O(n)$ work and $O(\log n \log \log n)$
span algorithm in the Common CRCW model using the $O(\log \log n)$-span approximate
compaction algorithm of Goldberg and Zwick~\cite{goldberg1995optimal}.
We can improve the span easily as follows by splitting the algorithm into two
phases.

\myparagraph{Phase One} Note that the purpose of compaction is to avoid performing wasteful work on
dead vertices each round. However, if the forest being contracted has just $O(n / \log n)$ vertices, then a ``wasteful'' algorithm
which avoids performing compaction takes at most $O(n)$ work anyway. So, the strategy for phase one is to contract the forest to size
$O(n / \log n)$, which, by Corollary~\ref{cor:contract-to-noverlogn} takes at most
$O(\log \log n)$ rounds. This is essentially the same strategy used by Gazit, Miller, and Teng~\cite{gazit1988optimal}. The work of the first phase is therefore $O(n)$ and the span,
using approximate compaction, is
\begin{equation}
    O\left( \left( \log \log n \right)^2 \right) + O\left( \log \log n \log^{(c)} n \right) = O\left( (\log \log n)^2 \right).
\end{equation}

\myparagraph{Phase Two}  In the second phase, we run the ``wasteful'' algorithm, which is simply the same algorithm
but not performing any compaction. Since the forest begins with $O(n / \log n)$ vertices
in this phase, this takes $O(n)$ work and completes in $O(\log n)$ rounds. Since
the span bottleneck is finding the maximal indepedent set in each round, the span is $O(\log n \log^{(c)} n)$
for any constant $c$.

Putting these together, the total work is $O(n)$, and the span is
\begin{equation}
  O\left( (\log \log n)^2 \right) + O\left( \log n \log^{(c)} n \right) = O\left( \log n \log^{(c)} n \right).
\end{equation}

\subsection{Eliminating concurrent writes}
The above optimized algorithm still uses approximate compaction which requires
the power of the Common CRCW model. We now breifly describe a variant without
this requirement. Phase Two is the same since it performs no compaction, so
we just have to improve Phase One. We do so by partitioning the vertices
into $n / \log n$ groups of size $O(\log n)$ by their identifier. Each round,
the algorithm simply considers each group and each vertex within each group in
parallel. After performing each round of contraction, each group independently
filters the vertices that contracted. We do so using an exact filter algorithm
instead of approximate compaction, but since each group has size $O(\log n)$,
the span is still $O(\log \log n)$ without requiring concurrent writes.

Since there are $n / \log n$ groups, each round takes an additional $O(n / \log n)$
work, but over $\log \log n$ rounds, this amounts to less than $O(n)$ additional
work, so the algorithm is still work efficient. At the end of the phase, collect
the vertices back into a single group in $O(n)$ work and $O(\log n)$ span,
then proceed with Phase Two.

\subsection{A lower span dynamic algorithm}

The span of the dynamic algorithm is also bottlenecked by the span of approximate
compaction, which is used on the affected vertices each round to remove vertices
that are no longer affected. We optimize the dynamic algorithm similarly to the
static algorithm, by splitting it into three phases this time.

\myparagraph{Phase One}  The algorithm will run Phase One for $\log_{6/5}\left(1 + n/k \right)$ rounds.
Note importantly that this depends on the batch size $k$, so the number of rounds
each phase runs is not always the same for each update operation.

Similarly to the optimized static algorithm without concurrent writes, we attack
the problem by splitting the affected vertices into groups. Specifically, we will
group the affected vertices into \emph{affected components} based on their origin
vertex as defined in Section~\ref{sec:analysis}.  There are $O(k)$ affected components,
each of which is initially a singleton defined by an affected vertex at round $0$.

In each round, the algorithm processes each affected component and each affected
vertex within in parallel. At the end of the round, the newly affected vertices
for the next round are identified for each component. Note that there could be
duplicates here since it is possible for multiple neighbors of a vertex to
spread to it at the same time. To tiebreak, and ensure that only one copy of an
affected vertex exists, if multiple vertices spread to the same vertex, only the one
with the lowest identifier adds the newly affected vertex to its component. Since the forest has
constant degree, this can be checked in constant time.

Given the set of affected vertices, new and old, we can then filter each component
independently in parallel to remove vertices that are no longer affected in the next round.
The critical insight is that according to Lemma~\ref{lem:affected-component-size}, each affected
component has \emph{constant size}, so this filtering takes constant work and span!

Having to maintain this set of $k$ affected components adds an additional $O(k)$
work to each round, but since we run Phase One for only $O(\log(1 + n/k))$ rounds,
this is still work efficient.

\myparagraph{Phase Two}
According to Corollary~\ref{cor:contract-to-k}, by the time Phase Two begins, the
forest will have contracted to the point that at most $k$ vertices remain. From this
point onwards, we use an algorithm very similar to the static algorithm to complete
the remaining rounds, and thus split into two more phases. First, we can collect
the contents of each of the $O(k)$ affected components back into a single array
of $O(k)$ affected vertices. This can be done in at most $O(k)$ work and $O(\log k)$ span.

Given an array of $O(k)$ affected vertices, we logically partition it into $k / \log k$
groups of size $O(\log k)$. We then run the basic dynamic update algorithm for $\log \log k$
rounds, using a filter algorithm (not approximate compaction) at each round to remove
vertices that are no longer affected. The span of this phase is therefore $O((\log \log k)^2)$,
and costs at most $O(k)$ additional work.

\myparagraph{Phase Three}
After completing Phase Two, by Corollary~\ref{cor:contract-to-noverlogn}, there can be at
most $O(k / \log k)$ vertices alive in the forest, and hence at most twice that many
affected vertices (affected vertices may be alive in either the new or old forest).
Phase Three simply collects the remaining affected vertices and performs the same
steps as Phase One. We create up to $O(k / \log k)$ singleton affected components,
and then in each round, process each vertex in each component in parallel, then spread
to any newly affected vertices. Each affected component remains constant size by Lemma~\ref{lem:affected-component-size}
and the work performed in each round is at most $O(k / \log k)$ for $O(\log k)$ rounds,
a total of $O(k)$ work.  Since each affected component is constant size, maintaining them
takes constant time. After $O(\log k)$ rounds, the forest is fully contracted.

In total, at most $O(k \log(1 + n/k))$ additional work is added, so the algorithm is still
work efficient. The span of Phase One and Three is dominated by the span of computing the maximal
independent set, so the final resulting span is now
\begin{equation}
\begin{split}
    & O\left( \log\left(1+\frac{n}{k}\right) \log^{(c)} k + \left( \log \log k \right)^2 + \log(k) \log^{(c)} k\right),  \\
    =&\ O\left( \log n \log^{(c)} k \right).
\end{split}
\end{equation}

\subsection{A lower span randomized algorithm}

With the span optimization above, the bottleneck of what remains is entirely due to the subroutine for finding
the maximal independent set, which takes $O(\log^{(c)} k)$.  Our optimization essentially removes the span causes
by compaction. In the randomized variant of the algorithm~\cite{acar2020batch}, the span is $O(\log n \log^* n)$,
where the $\log^* n$ factor also comes from performing approximate compaction (which is $O(\log^* n)$ when randomization is allowed).
In the randomized variant, however, finding the independent set takes
constant span rather rather than $O(\log^{(c)} k)$.  It works by raking all the leaves, then flipping a coin for each
vertex and compressing the vertices that flip heads while its two neighbors flip tails, which happens with $1/8$ probability.

It can therefore be shown that a constant fraction of the vertices contract on each round, and that the contraction process
takes $O(\log n)$ rounds with high probability. We can therefore substitute our deterministic maximal independent set with the randomized
variant and use Acar et al.'s~\cite{acar2020batch} definition of affected vertices to obtain a more efficient randomized algorithm.
Their analysis implies that the resulting algorithm is work efficient, running in $O(k \log(1+n/k))$ expected work for a batch
of $k$ updates, and in $O(\log n)$ span.

\section{Conclusion}

We presented the first deterministic work-efficient parallel algorithm for the batch-dynamic trees problem. We showed
that parallel Rake-Compress Trees~\cite{acar2005experimental,acar2020batch} can be derandomized using a variant of parallel tree
contraction that contracts a deterministic maximal independent set of degree one and two vertices.
Our algorithm performs $O(k \log(1 + n/k))$ work for a batch of $k$
updates and runs in $O(\log n \log^{(c)} k)$ span for any constant $c$. We also applied our techniques to improve the span of the
randomized variant from $O(\log n \log^* n)$ to just $O(\log n)$, and showed that other batch-dynamic graph problems
can be solved deterministically.

Several interesting questions still remain open. Our deterministic algorithm requires $O(\log n \log^{(c)} k)$ span, while our
improvement of the randomized variant requires just $O(\log n)$. Can we obtain a deterministic algorithm with $O(\log n)$ span?
It seems unlikely that the exact algorithm that we present here could be optimized to that point, since that would imply finding
a maximal independent set in $O(1)$ span work efficiently, and the fastest known algorithms run in $O(\log^* n)$ span but are not
even work efficient.  This doesn't rule out using other techniques instead of a maximal independent set, however.  The tree
contraction needs only to have the property that it contracts a constant fraction of the vertices in any subtree in order to
obtain our bounds, so any constant ruling set would suffice if one could compute it in $O(1)$ span.

Prior algorithms for deterministic parallel tree contraction are based on Cole and Vishkin's deterministic coin tossing technique~\cite{cole1986deterministic}
(which happens to be a subroutine used by our maximal independent set algorithm). It would be interesting to investigate whether
this could be used to obtain a more efficient dynamic algorithm.
Lastly, can our deterministic RC-Trees be used to derandimize other existing algorithms?

\bibliographystyle{abbrv}
\bibliography{ref}

\clearpage

\appendix

\section{Appendix: Derandomizing downstream results}

\subsection{Batch-dynamic graph connectivity}

Acar, Anderson, Blelloch, and Dhulipala~\cite{acar2019parallel} give a parallel batch-dynamic algorithm for general
graph connectivity which runs in $O(k \log n \log(1 + n/\Delta))$ amortized expected work and $O(\polylog n)$ span for an
average batch size of $\Delta$. Inside their algorithm, they use a parallel batch-dynamic Euler Tour
tree to maintain a set of spanning forests. If the input graph is ternarized so that a Rake-Compress Tree can be
used instead, we can derandomize the algorithm by substituting the Euler Tour tree for one.

In addition to Rake-Compress trees, the algorithm uses just a few other pieces of randomization. When a batch of
edges is inserted, it uses a semisort to group the edges by endpoint. We can instead use a regular sort which
takes $O(k \log n)$ work to perform this step since that doesn't increase the total work. It also computes a
spanning forest of the newly added edges with the existing connected components. This takes $O(k \alpha(n,m))$
work using the most efficient deterministic algorithm~\cite{cole1991approximate}, which still fits the work bounds.

The deletion algorithm is more complex, and involves performing a spanning forest computation on a set of
replacement candidate edges over a series of layers. Since spanning forests cannot be computed work efficiently
deterministically, this will incur an overhead of $O(\alpha(n,m))$~\cite{cole1991approximate}. Therefore the algorithm runs in
\begin{equation}
    O\left( k \log n \log\left( 1 + \frac{n}{\Delta} \right) \alpha(n,m) \right),
\end{equation}
amortized work and $O(\polylog n)$ span, i.e., just a factor of $\alpha(n,m)$ more work than the randomized algorithm.
The discovery of a work-efficient deterministic spanning forest algorithm (though this has been open for 20 years)
would make it match the work bounds.

\subsection{Batch-incremental MST}

Anderson, Blelloch, and Tangwongsan present an algorithm for batch-incremental minimum spanning trees which can insert
a batch of $k$ edges in $O(k \log(1 + n/k))$ work and $O(\polylog n)$ span. Underneath, their algorithm maintains the
MST using a (randomized) RC-Tree. It also however uses Cole, Klein, and Tarjan's linear-work parallel MST algorithm as a subroutine,
which there is no known work-efficient deterministic equivalent for. Derandomizing it will therefore also incur a penalty of $O(k \log \log n)$
work to pay for the MST using the algorithm of Chong et al.~\cite{chong2003improving}.

The final work of the algorithm after applying this penalty is therefore
\begin{equation}
    O\left( k \log\left( 1 + \frac{n}{k} \right) + k \log \log n \right).
\end{equation}

\end{document}